\documentclass[aps,prb,floatfix,twocolumn,superscriptaddress]{revtex4-2}

\usepackage{mathrsfs}
\usepackage{graphicx}
\usepackage{amsmath}
\usepackage{amssymb}
\usepackage{amsthm}
\usepackage{tikz-cd}
\usepackage{xcolor}
\usepackage{bm}
\usepackage{hyperref}
\usepackage{capt-of}
\usepackage[english]{babel}
\usepackage{cancel}
\usepackage[normalem]{ulem}

\usepackage{epstopdf}
\usepackage{relsize}
\usepackage{CJK}
\usepackage{textcomp}
\usepackage{dsfont}
\usepackage{float}

\newcommand{\me}{\mathrm{e}}
\newcommand{\mi}{\mathrm{i}}
\newcommand{\dif}{\mathrm{d}}
\newcommand{\tr}{\operatorname{Tr}}
\newcommand{\rank}{\operatorname{rank}}

\newcommand{\End}{\operatorname{End}}

\newcommand{\diag}{\operatorname{diag}}

\newtheorem{proposition}{Proposition}

\begin{document}
\title{Uhlmann Geometry of Fixed-Rank Density Matrices: A Fiber-Bundle Approach}

\author{Xu-Yang Hou}
\affiliation{School of Physics, Southeast University, Nanjing 211189, China}

\author{Hao Guo}
\email{guohao.ph@seu.edu.cn}
\affiliation{School of Physics, Southeast University, Nanjing 211189, China}
\affiliation{Hefei National Laboratory, University of Science and Technology of China, Hefei 230088, China}

\begin{abstract}
For full-rank density matrices the state space is contractible, the Uhlmann
bundle is topologically trivial, and the holonomy admits no quantized
invariants; although the Uhlmann phase is genuinely geometric, this
topological poverty has kept it from serving as a robust physical diagnostic
of mixed-state matter. Mixed states of fixed rank below the Hilbert-space
dimension, however, are ubiquitous: reduced states of constrained subsystems, states confined to invariant
sectors, and states supported on decoherence-free subspaces can have a
support smaller than the Hilbert space, a support that can vary with
parameters. Their
geometry is far richer: the support carries genuine curvature, non-Abelian
holonomy, and Chern topology. We formulate Uhlmann's theory directly on the
manifold of rank-$k$ density matrices: minimal purifications make the
fixed-rank stratum the base of a principal $U(k)$-bundle whose Uhlmann
connection is uniquely determined by a Sylvester equation, and in an
eigenframe this connection takes a closed form reducing to the Berry,
Wilczek--Zee, and faithful Uhlmann connections at $k=1$, at equal weights,
and at $k=N$. The fixed-rank bundle inherits the topology of the
Grassmannian, and non-factorizable higher Chern topology requires failure of the global
eigenline splitting, which in the present spectral setting requires
degeneracy, minimally realized by a rank-2 Yang monopole whose quantized
second Chern number links the geometry to the four-dimensional quantum Hall
response. Solvable models, from a genuinely non-Abelian holonomy to a
dissipatively driven orbit whose Uhlmann holonomy is steered by the
reservoir through an elliptic-integral spectral dressing, illustrate the
content: the supporting subspace carries the topology, the spectral weights
shape the transport, and Uhlmann holonomy provides a direct geometric
probe of the resulting mixed-state structures.
\end{abstract}

\maketitle

\section{Introduction}
\label{intro}

Geometric phases have become a cornerstone of modern quantum physics,
providing a unifying language for topological phenomena that depend only on
the path traversed in parameter space rather than on the dynamical details of
the evolution. The concept originates from Berry's adiabatic phase for a
non-degenerate pure state \cite{Berry84}, which was soon generalized to
non-adiabatic evolution \cite{PhysRevLett.58.1593}. For a pure state the
natural geometric setting is the Hopf fibration
$U(1)\hookrightarrow S^{2N-1}\to\mathbb{C}P^{N-1}$, and the Berry phase
factor is the $U(1)$ holonomy of the associated connection;
this framework underlies a broad range of topological phenomena, from the
quantum Hall effect to topological insulators and superconductors
\cite{TKNN,Haldane,KaneRMP,ZhangSCRMP,MooreN,MoorePRB,FuLPRL,Bernevigbook,ChiuRMP,KaneMele,KaneMele2,BernevigPRL}.
When the relevant subspace is degenerate, the scalar Berry phase is replaced
by a non-Abelian holonomy, the Wilczek--Zee phase
\cite{PhysRevLett.52.2111}, which acts as a unitary matrix on the degenerate
subspace. This non-Abelian structure has found applications in molecular and
nuclear settings, including nuclear rotations of diatomic molecules
\cite{PhysRevLett.56.893,PhysRevLett.56.2779}, nuclear quadrupole resonance
\cite{PhysRevLett.60.2734,PhysRevA.42.3107}, dynamics of deformable bodies
\cite{Shapere_1989,shapere1989gauge}, molecular Kramers doublets
\cite{PhysRevLett.59.161}, semiconductor heterostructures
\cite{PhysRevB.57.12302}, and ion traps \cite{PhysRevA.59.2910}; it has also
been observed experimentally in atomic Bose-Einstein condensates
\cite{Sugawa_2021}, and a comprehensive account is given in
Refs.~\cite{Anandan_1988,Cheon_2009,PhysRevA.71.012110,PhysRevA.93.012116}.

Realistic quantum systems are, however, rarely isolated. They interact with an
environment and exist at finite temperature, so that the state is described by
a density matrix rather than a wavefunction. Generalizing geometric phases to
mixed states has therefore been a long-standing problem. Two main approaches
have been developed. The first, due to Sj\"oqvist and collaborators, defines
an interferometric geometric phase for mixed states through a phase shift
acquired by a Mach-Zehnder interferometer, and has been studied theoretically
\cite{PhysRevLett.85.2845,PhysRevA.67.020101,PhysRevLett.90.160402,Faria_2003}
and observed in nuclear magnetic resonance \cite{PhysRevLett.91.100403},
polarized neutrons \cite{PhysRevLett.101.150404}, and Mach-Zehnder
interferometry \cite{PhysRevLett.94.050401}. The second, formulated
independently by Uhlmann, is rooted in the purification of density matrices
\cite{Uhlmann1986,Uhlmann1989,Uhlmann1991,UHLMANN1995461}. For a density
matrix $\rho$, one introduces an amplitude $W$ such that $\rho=WW^\dagger$,
and a parallel-transport condition on $W$ defines a connection whose
holonomy gives the Uhlmann phase.

The Uhlmann geometric framework has become an important approach to mixed
quantum states. Early studies revealed finite-temperature phase transitions
through the Uhlmann phase and introduced Uhlmann numbers as geometric
indicators in one and two dimensions
\cite{Viyuela14,PhysRevLett.113.076408,PhysRevLett.113.076407}. It has since
been applied to various systems and phases, including composite,
superconducting, quantum-walk, nonequilibrium, and other condensed-matter
platforms
\cite{PhysRevA.98.033816,Zhang21,PhysRevA.104.042204,HeChien2022Lindblad,OurUhlmannQuench,PhysRevA.110.043313,PhysRevB.110.134319,PhysRevB.110.035144,ourPRB20,OurPRB20b}.
It has also been applied to complex multilevel systems, including three-level
and higher-spin models, revealing richer mixed-state geometric structures
\cite{GalindoRojasMaytorena2021,PhysRevA.104.023303,Hou2023,NIETOGUADARRAMA2024169706,98sq-16bz}.
Building on the Uhlmann phase, the connection has been used to construct
Chern-type quantities and study finite-temperature geometric and topological
properties of mixed states \cite{PhysRevB.97.235141,prq8-c9ns}.
Experimental realizations of Uhlmann-related quantities have been
demonstrated in quantum simulation platforms, including superconducting
qubits and photonic quantum walks
\cite{Viyuela2018TopologicalUhlmann,Mastandrea_2026,wang2025measuringmixedstatetopologicalinvariant}.
Parallel purification-based formulations have offered new perspectives on
Uhlmann anholonomy and its links to broader quantum holonomy structures
\cite{LevayVelich2026}.

Despite this progress, the standard formulation is most commonly presented
for faithful (full-rank) density matrices. Then $W$ is an invertible
$N\times N$ matrix, the purification bundle is a principal $U(N)$-bundle
over the faithful-state manifold, and the contractible base makes the bundle
topologically trivial; consequently no nontrivial characteristic class arises,
and the holonomy is unconstrained by integer invariants. For rank-deficient
states, the purification space is instead naturally described by minimal
$N\times k$ amplitudes, and the corresponding fixed-rank geometry requires a
separate treatment. The breakdown of the full-rank formalism near
rank-deficient states was analyzed in our previous work \cite{HuangDefect},
where the connection becomes singular on the boundary strata; here we provide
the constructive counterpart by formulating Uhlmann theory intrinsically on
the fixed-rank stratum. The fixed-rank purification bundle has also been noted
as geometric background in the information-geometry literature \cite{Mera19};
here we develop the corresponding theory systematically, including the
explicit connection, topology, and physical consequences.

This restriction is not merely technical. At finite temperature or in
open systems, density matrices can instead remain confined to a fixed-rank
sector below the Hilbert-space dimension: a reservoir may decohere a system
into a low-rank steady state, an exact zero-temperature projector may be the
relevant state, or an invariant or symmetry-protected subspace may be
populated while its complement is empty.
The state then lies on a fixed-rank stratum $\mathcal{D}_k^N$ with $k<N$,
whose topology differs qualitatively from the faithful-state manifold. The
base is no longer contractible: it deformation-retracts onto
$\operatorname{Gr}(k,N)$, the structure group reduces from $U(N)$ to $U(k)$,
and nontrivial characteristic classes become possible. Extending Uhlmann
theory to this fixed-rank setting is thus physically motivated and
mathematically natural.

In this work we construct such an extension. Minimal purifications
$W\in\mathbb{C}^{N\times k}$ place a principal $U(k)$-bundle over
$\mathcal{D}_k^N$, with the Uhlmann connection fixed uniquely by a Sylvester
equation. The eigenframe formula interpolates between the Berry connection
for $k=1$, the Wilczek--Zee connection for an equal-weight projector, and
the full-rank Uhlmann connection for $k=N$. We analyze the topology of the
fixed-rank bundle, show that its first Chern class equals that of the
tautological bundle, and prove that for non-degenerate spectra all
characteristic classes factorize into eigenline Berry data, so that
non-factorizable higher Chern topology requires failure of the global
eigenline splitting, which in the present spectral setting requires
degeneracy. We then
study four examples, including a rank-2 Yang monopole with quantized second
Chern number and a dissipatively driven qutrit orbit whose Uhlmann holonomy
records the reservoir's tracking ratio through a closed elliptic-integral
dressing factor, and clarify that the scalar Uhlmann phase is geometric
rather than topological, with integer topology residing in characteristic
classes and non-Abelian Wilson loops.

The rest of the paper is organized as follows. Section~\ref{fUB} recalls the
full-rank Uhlmann construction. Section~\ref{geometry} develops the
fixed-rank principal bundle framework, including minimal purifications, the
$U(k)$ connection, holonomy, and the Bures metric. Section~\ref{topology}
analyzes the topology of the fixed-rank bundle, its Chern classes, the
factorization for non-degenerate spectra, and the geometric versus
topological character of the Uhlmann phase. Section~\ref{examples} presents
four illustrative examples and compares them. Section~\ref{conclusion}
concludes. Detailed derivations are collected in the appendices.

\section{Overview of the Full-Rank Uhlmann Bundle}
\label{fUB}

Throughout this work we use natural units $\hbar=1$. We first recall the
full-rank Uhlmann construction, to highlight the modifications needed for
fixed rank below the Hilbert-space dimension.

For a full-rank density matrix $\rho>0$, a purification is an invertible
$N\times N$ matrix $W$ satisfying $\rho=WW^\dagger$. The polar decomposition
$W=\sqrt{\rho}\,U$ with $U\in U(N)$ exhibits the $U(N)$ gauge freedom of the
purification. The Uhlmann parallel-transport condition
\begin{equation}
W^\dagger \dot W = \dot W^\dagger W
\label{Uhlmann-cond-full}
\end{equation}
selects, among all purifications of a curve $\rho(t)$, the one that minimizes
the Hilbert--Schmidt distance between infinitesimally separated purifications
and is independent of the gauge choice $U$~\cite{Uhlmann86}. Substituting
$W=\sqrt{\rho}\,U$ into Eq.~\eqref{Uhlmann-cond-full} yields the
parallel-transport equation for the unitary factor,
\begin{equation}\label{NU}
\dot U + \mathcal A_{\rm U} U = 0,
\end{equation}
where the Uhlmann connection $\mathcal A_{\rm U}$ is given in the eigenbasis
of $\rho$ by
\begin{equation}
\mathcal{A}_{\rm U}
= -\sum_{a,b=1}^{N}
\frac{\langle a|\,[\mathrm{d}\sqrt{\rho}, \sqrt{\rho}]\,|b\rangle}
{\lambda_a+\lambda_b}\,|a\rangle\langle b|.
\label{eq:full-rank-AU}
\end{equation}
Here $\lambda_a>0$ are the eigenvalues of $\rho$ and $|a\rangle$ the
corresponding eigenvectors. It follows directly from
Eq.~\eqref{eq:full-rank-AU} that $\mathcal A_{\rm U}^\dagger=-\mathcal A_{\rm
U}$.

The base manifold of the full-rank theory, i.e., the manifold of faithful
states, is contractible; consequently, the associated principal $U(N)$-bundle
is topologically trivial. Our aim is to extend this construction to density
matrices of fixed rank $k<N$, where the base manifold becomes nontrivial and
the structure group reduces from $U(N)$ to $U(k)$.

\section{Geometric Setup: Principal Bundle Framework}\label{geometry}
\subsection{The fixed-rank manifold $\mathcal{D}_k^N$}

Let $\mathcal{H}=\mathbb{C}^N$ be the Hilbert space of the system. We
consider the set of density matrices of fixed rank $k$ ($1\le k\le N$):
\begin{equation}
\mathcal{D}_k^N=\left\{\rho\in\End(\mathcal{H})\;\middle|\;\rho\ge 0,\ \tr\rho=1,\ \rank\rho=k\right\}.
\end{equation}
This set is a smooth submanifold (a stratum) of the full density matrix
space. Its real dimension can be counted as follows: the support of $\rho$ is
a $k$-dimensional subspace $S\subset\mathcal{H}$, parameterized by the
complex Grassmannian $\operatorname{Gr}(k,N)$ of real dimension $2k(N-k)$. On
$S$, $\rho$ restricts to a strictly positive $k\times k$ matrix of unit
trace, which has real dimension $k^2-1$. Hence
\begin{equation}
\dim_{\mathbb{R}}\mathcal{D}_k^N=2k(N-k)+k^2-1=2Nk-k^2-1.
\label{dimD}
\end{equation}
For $k=N$ this gives $N^2-1$, the dimension of the faithful-state manifold.

It is useful to view the full density matrix space as a disjoint union of
strata:
\begin{equation}
\mathcal{D}^N=\coprod_{k=1}^N \mathcal{D}_k^N,
\qquad
\overline{\mathcal{D}_k^N}=\bigcup_{j=1}^k \mathcal{D}_j^N.
\end{equation}
The closure relation reflects the fact that a rank-$k$ density matrix can be
obtained as a limit of rank-$j>k$ states by letting $j-k$ eigenvalues tend to
zero. This stratified structure is important when one attempts to extend
geometric constructions across rank-changing paths; in the present work we
remain within a fixed stratum.
\subsection{Minimal purifications and the principal $U(k)$-bundle}

Instead of using singular $N\times N$ amplitudes, we introduce the space of
\emph{minimal purifications}
\begin{equation}
\mathcal{P}_k=\left\{W\in\mathbb{C}^{N\times k}\;\middle|\;\rank W=k,\ \tr W^\dagger W=1\right\}.
\label{eq:Pk}
\end{equation}
The projection map $\pi:\mathcal{P}_k\to\mathcal{D}_k^N$ is defined by
\begin{equation}
\pi(W)=WW^\dagger=\rho.
\end{equation}
The right action of $V\in U(k)$ on the total space, $W\mapsto WV$,
preserves the projection since $(WV)(WV)^\dagger=WW^\dagger$. Because $W$ has
full column rank, this action is free: $WV=W$ implies $V=I_k$. The
dimensions satisfy
\begin{align}
\dim_{\mathbb{R}}\mathcal{P}_k-\dim_{\mathbb{R}}U(k)&=(2Nk-1)-k^2=2Nk-k^2-1\notag
\\&=\dim_{\mathbb{R}}\mathcal{D}_k^N,
\end{align}
so that $\mathcal{P}_k$ has exactly the dimension of a principal $U(k)$-bundle
over $\mathcal{D}_k^N$.

To complete the bundle structure, we need to show that each fiber is a copy
of $U(k)$. This follows from the polar decomposition of rectangular matrices
developed in Sec.~\ref{IIC}: if
$W_1W_1^\dagger=W_2W_2^\dagger=\rho$, with both $W_1,W_2$ full-column-rank
$N\times k$ matrices, then there exists a unique $V\in U(k)$ such that
$W_2=W_1V$. Consequently,
\begin{equation}
U(k)\hookrightarrow\mathcal{P}_k\xrightarrow{\;\pi\;}\mathcal{D}_k^N
\label{principal_bundle}
\end{equation}
is a genuine principal bundle. Here $\mathcal{P}_k$ is the total space,
$\mathcal{D}_k^N$ is the base manifold, and each fiber $\pi^{-1}(\rho)$ over
$\rho\in\mathcal{D}_k^N$ is diffeomorphic to the structure group $U(k)$.

\subsection{Polar decomposition for rectangular amplitudes}
\label{IIC}

We now construct the rectangular analogue of the polar decomposition
$W=\sqrt{\rho}\,U$ used in the full-rank case. Let $W\in\mathcal{P}_k$ be a
minimal purification. Its singular value decomposition (SVD) reads
$W=X\Sigma \mathcal{V}^\dagger$, where $X\in\mathbb{C}^{N\times k}$ satisfies
$X^\dagger X=I_k$, $\Sigma\in\mathbb{R}^{k\times k}$ is positive diagonal, and
$\mathcal{V}\in U(k)$. Then $\rho=WW^\dagger=X\Sigma^2 X^\dagger$, and the
unique positive-semidefinite square root is
\begin{equation}
\sqrt{\rho}=X\Sigma X^\dagger,
\end{equation}
which vanishes on $\ker\rho$. The appropriate inverse is the
Moore--Penrose pseudoinverse
\begin{equation}
\sqrt{\rho^{+}}=X\Sigma^{-1}X^\dagger
\quad\bigl(\text{with }(\sqrt{\rho})^+=\sqrt{\rho^+}\bigr),
\end{equation}
satisfying
$\sqrt{\rho}\,\sqrt{\rho^{+}}=XX^\dagger\equiv P_\rho$, the orthogonal
projector
onto $\operatorname{supp}\rho$.

Recall that an operator $\mathcal{U}$ is a partial isometry if and only if
both
$\mathcal{U}^\dagger\mathcal{U}$ and $\mathcal{U}\mathcal{U}^\dagger$ are
orthogonal projectors. The left polar decomposition of $W$ is then
\begin{equation}
W = \sqrt{\rho}\,\mathcal{U},
\qquad
\mathcal{U} = \sqrt{\rho^{+}}W,
\label{eq:polar}
\end{equation}
and substituting the SVD gives $\mathcal{U} = X\mathcal{V}^\dagger$. Therefore
$\mathcal{U}$ is a partial isometry with $\mathcal{U}^\dagger\mathcal{U}=I_k$
and $\mathcal{U}\mathcal{U}^\dagger=P_\rho$. It is the unique partial
isometry
with range $\operatorname{supp}\rho$ appearing in this decomposition;
it is not an arbitrary $N \times N$ unitary matrix. For $k=N$, the partial
isometry becomes a full unitary and Eq.~\eqref{eq:polar} reduces to the
standard polar decomposition of the faithful theory.

The partial-isometry condition also exhibits the reduction of the structure
group from $U(N)$ to $U(k)$: the freedom in choosing $\mathcal{U}$ is
precisely a right multiplication by an element of $U(k)$, which corresponds
to the fiber of the bundle \eqref{principal_bundle}. Indeed, if
$W=\sqrt{\rho}\,\mathcal{U}_1=\sqrt{\rho}\,\mathcal{U}_2$ with both
$\mathcal{U}_1,\mathcal{U}_2$ partial isometries having range
$\operatorname{supp}\rho$, then $\sqrt{\rho}(\mathcal{U}_1-\mathcal{U}_2)=0$
forces $\mathcal{U}_1=\mathcal{U}_2$.

\subsection{Parallel transport and the Uhlmann connection}
\label{connection}

\subsubsection{Horizontal subspace and the Uhlmann condition}

We equip the total space $\mathcal{P}_k$ with the Hilbert--Schmidt metric
\begin{equation}
g_{\mathrm{HS}}(X,Y)=\operatorname{Re}\tr(X^\dagger Y),\quad X,Y\in T_W\mathcal{P}_k.
\end{equation}
The vertical subspace is $V_W\mathcal{P}_k=\{W\xi\mid \xi\in\mathfrak{u}(k)\}$,
where $\xi^\dagger=-\xi$. The horizontal subspace $H_W\mathcal{P}_k$ is its
$g_{\mathrm{HS}}$-orthogonal complement. Thus $X_H\in T_W\mathcal{P}_k$ is
horizontal iff for all $\xi\in\mathfrak{u}(k)$,
\begin{equation}
\operatorname{Re}\tr(X_H^\dagger W\xi)=0.
\end{equation}
Writing $M=X_H^\dagger W=M_H+M_A$ with $M_H^\dagger=M_H$ and $M_A^\dagger=-M_A$,
the Hermitian part drops out because $\tr(M_H\xi)$ is purely imaginary.
The remaining condition $\tr(M_A\xi)=0$ for all $\xi\in\mathfrak{u}(k)$ forces
$M_A=0$ by non-degeneracy of the trace form on $\mathfrak{u}(k)$. Hence
\begin{equation}
X_H^\dagger W = W^\dagger X_H,
\label{eq:horizontal}
\end{equation}
which is the Uhlmann horizontal condition, identical in form to the full-rank
parallel-transport condition Eq.~\eqref{Uhlmann-cond-full}. Along a curve
$W(t)$, with $X_H=\dot W$, this becomes
$W^\dagger\dot W=\dot W^\dagger W$.

\subsubsection{Ehresmann connection and the Sylvester equation}

Since $T_W\mathcal{P}_k=H_W\mathcal{P}_k\oplus V_W\mathcal{P}_k$, any tangent
vector $X\in T_W\mathcal{P}_k$ decomposes uniquely as
\begin{equation}
X=X_H+W\omega(X),\quad X_H\in H_W\mathcal{P}_k,\quad \omega(X)\in\mathfrak{u}(k).
\label{eq:decomp}
\end{equation}
The assignment $X\mapsto\omega(X)$ defines a $\mathfrak{u}(k)$-valued one-form
on $\mathcal{P}_k$. Imposing the horizontal condition on $X_H=X-W\omega(X)$,
\begin{equation}
W^\dagger\bigl(X-W\omega(X)\bigr)=\bigl(X-W\omega(X)\bigr)^\dagger W,
\end{equation}
using $\omega(X)^\dagger=-\omega(X)$,
leads to the Sylvester equation
\begin{equation}
h\,\omega(X)+\omega(X)\,h = W^\dagger X - X^\dagger W,
\quad h\equiv W^\dagger W>0.
\label{eq:sylvester}
\end{equation}
Because $h$ is strictly positive, the linear map
$L:\omega\mapsto h\omega+\omega h$ is invertible on $\mathfrak{u}(k)$;
this point is detailed in Appendix~\ref{app:connection}. Hence $\omega(X)$
exists uniquely for every tangent vector $X$.

We now state the three defining properties that make $\omega$ an Ehresmann
connection on the principal bundle \eqref{principal_bundle}.

(1) \emph{ Reproducing property.} For $u\in\mathfrak{u}(k)$, let $u^\#$ denote
the fundamental vector field generated by the right action:
\begin{equation}
u^\#_W=\frac{\dif}{\dif t}\Big|_{t=0} W \me^{tu}=Wu,
\label{eq:fundamental}
\end{equation}
which is vertical. Therefore Eq.~\eqref{eq:decomp} gives
\begin{equation}
\omega(u^\#)=u.
\end{equation}

(2) \emph{ Equivariance.} Under the right translation $R_V:W\mapsto WV$ with
constant $V\in U(k)$, the connection transforms as
\begin{equation}
R_V^*\omega = V^\dagger\omega V = \mathrm{Ad}_{V^{-1}}\omega.
\end{equation}

(3) \emph{Gauge transformation law.} For a local gauge transformation
$W'=WV$ with $V$ depending on the base point, one has
\begin{equation}
\omega' = V^\dagger\omega V + V^\dagger\dif V.
\label{eq:gauge}
\end{equation}

The proofs of the last two properties are given in
Appendix~\ref{app:connection}.
These three properties identify $\omega$ as an Ehresmann connection, and its
kernel is the horizontal distribution, $\ker\omega_W=H_W\mathcal{P}_k$.
Consequently, a curve $\tilde\gamma(t)=W(t)$ is a horizontal lift of its
projection $\gamma=\pi\circ\tilde\gamma$ if and only if $\omega(\dot W)=0$.
Equivalently, the Uhlmann condition is equivalent to the vanishing of the
connection:
\begin{equation}
W^\dagger\dot W=\dot W^\dagger W
\Longleftrightarrow
\omega(\dot W)=0.
\label{eq:horizontal_equivalence}
\end{equation}
This follows by contracting the decomposition \eqref{eq:decomp} with $\dot W$
and using the bijectivity of $L$; the details are also in
Appendix~\ref{app:connection}.

\subsection{Uhlmann connection and holonomy}
\label{direct_derivation}

\subsubsection{Local sections and the Uhlmann connection}

To obtain a computable expression for the Uhlmann connection, we choose a
local
orthonormal frame $\{|a\rangle\}_{a=1}^k$ spanning
$\operatorname{supp}\rho$,
assembled into $E=(|1\rangle,\dots,|k\rangle)\in\mathbb{C}^{N\times k}$ with
$E^\dagger E=I_k$. In this frame $\rho=ErE^\dagger$ with
$r=\diag(\lambda_1,\dots,\lambda_k)>0$. Such a diagonalizing frame exists on
any patch where the spectrum is non-degenerate; near degeneracies one may
always choose a smooth support frame, for which $r$ is merely positive
Hermitian, and Eq.~\eqref{AU_explicit} below is to be understood in a smooth
spectral frame wherever one exists, while Eq.~\eqref{AU_sylvester} remains
the coordinate-independent statement. A natural local section of the bundle
\eqref{principal_bundle} is
\begin{equation}
s(\rho)=E\sqrt{r}.
\end{equation}
Unlike the full-rank case $k=N$, where $\sigma(\rho)=\sqrt{\rho}$ is a global
section, $s$ is in general only locally defined because the eigenframe $E$ is
fixed only up to the allowed rotations and no smooth global choice exists when
the base $\mathcal{D}_k^N\simeq\operatorname{Gr}(k,N)$ is topologically
nontrivial. A detailed discussion is given in Appendix~\ref{app:local}.

Pulling back the Ehresmann connection $\omega$ by $s$ yields the local
$\mathfrak{u}(k)$-valued Uhlmann connection on the base manifold,
\begin{equation}
\mathcal{A}_{\text{U}}\equiv s^*\omega\in
\Omega^1\bigl(\mathcal{D}_k^N,\mathfrak{u}(k)\bigr).
\label{AU_def}
\end{equation}
Within the domain of $s$, any point of the fiber is written uniquely as
\begin{equation}
W=s(\rho)\mathcal{U},\qquad \mathcal{U}\in U(k),
\label{W=sV}
\end{equation}
generalizing the full-rank parametrization $W=\sqrt{\rho}U$. Here
$\mathcal{U}\in U(k)$ is the $k\times k$ fiber coordinate, which differs from
the $N\times k$ partial isometry in Eq.~\eqref{eq:polar}; the latter is
recovered as $E\mathcal{U}$ in the local eigenframe. Using the reproducing
property and equivariance of the Ehresmann connection
$\omega$ mentioned above, one obtains the local relation between $\omega$ and
the pulled-back
connection $\mathcal{A}_{\text{U}}$,
\begin{equation}
\omega=\mathcal{U}^\dagger \pi^*\mathcal{A}_{\text{U}}\mathcal{U}
+\mathcal{U}^\dagger \dif_P \mathcal{U},
\label{omegaV}
\end{equation}
generalizing the full-rank relation. The derivation is given in
Appendix~\ref{app:local}. Substituting
Eq.~\eqref{W=sV} and Eq.~\eqref{omegaV} into the Sylvester equation
\eqref{eq:sylvester}, the terms involving $\dif \mathcal{U}$ cancel by
unitarity of $\mathcal{U}$,
and one obtains
\begin{equation}
r\mathcal{A}_{\text{U}}+\mathcal{A}_{\text{U}}r
=s^\dagger\dif s-\dif s^\dagger s.
\label{AU_sylvester}
\end{equation}
The detailed derivation is also given in Appendix~\ref{app:local}. Evaluating
the right-hand side in the
eigenframe gives the explicit formula
\begin{equation}
(\mathcal{A}_{\text{U}})_{ab}
=\frac{2\sqrt{\lambda_a\lambda_b}}{\lambda_a+\lambda_b}
\langle a|\dif b\rangle.
\label{AU_explicit}
\end{equation}
This is the central result of the fixed-rank theory. It is manifestly
anti-Hermitian, $(\mathcal{A}_{\text{U}})_{ba}^*=-(\mathcal{A}_{\text{U}})_{ab}$.
Under a gauge transformation $W\to WV$ the connection transforms in the
standard way
\begin{equation}
\mathcal{A}_{\text{U}}\to V^\dagger \mathcal{A}_{\text{U}} V
+V^\dagger \dif V.
\label{gauge_AU}
\end{equation}
For $k=N$ the section $s=E\sqrt{r}$ is gauge-equivalent, through
$\sigma(\rho)=s(\rho)E(\rho)^\dagger=E\sqrt{r}E^\dagger=\sqrt{\rho}$, to the
global canonical section $\sigma=\sqrt{\rho}$, and Eq.~\eqref{AU_explicit}
reduces to the standard full-rank Uhlmann connection, confirming that the
present construction is a genuine extension.

\subsubsection{Parallel transport and Uhlmann holonomy}

Recall that horizontal transport along a curve $\tilde\gamma(t)=W(t)$ is
characterized by $\omega(\dot W)=0$, where $\dot W$ is the tangent to the
lift.
Let $\gamma(t)=\pi(W(t))$ be the projected curve with tangent
$X=\dot\gamma$, so that $\pi_*\dot W=X$. Using the relation (\ref{omegaV}),
we find
\begin{equation}
\omega(\dot W)
=\mathcal{U}^\dagger \mathcal{A}_{\text{U}}(X)\mathcal{U}
+\mathcal{U}^\dagger \dot{\mathcal{U}}
.
\end{equation}
Since $\mathcal{U}$ is unitary and hence invertible, the horizontal condition
$\omega(\dot W)=0$ is equivalent to
\begin{equation}
\nabla_X\mathcal{U}\equiv\frac{\dif \mathcal{U}}{\dif t}
+\mathcal{A}_{\text{U}}(X)\mathcal{U}=0.
\label{covariant_V}
\end{equation}
This is the fixed-rank counterpart of the covariant equation (\ref{NU})
obeyed by the
phase factor in the faithful theory. Integrating along a closed loop $C$
yields
\begin{equation}
\mathcal{U}(C)=\mathcal{P}\exp\left(-\oint_C
\mathcal{A}_{\text{U}}\right)\mathcal{U}(0),
\end{equation}
so that the Uhlmann holonomy is
\begin{equation}
\mathcal{U}_{\text{U}}(C)=\mathcal{P}\exp\left(-\oint_C
\mathcal{A}_{\text{U}}\right)
\in U(k).
\label{holonomy}
\end{equation}
The Uhlmann phase for a closed loop $C$ is then defined by
\begin{equation}
\theta_{\text{U}}=\arg\tr\left[W^\dagger(0)W(T)\right]
=\arg\tr_k\left[r(0)\,\mathcal{U}_{\text{U}}(C)\right],
\label{theta_U}
\end{equation}
where $r(0)=s^\dagger(\rho(0))s(\rho(0))$ is the $k\times k$ restriction of
$\rho(0)$ to its support in the chosen eigenframe. For $k=N$ this reduces to
the familiar expression $\arg\tr[\rho(0)\mathcal{U}_{\text{U}}(C)]$.

The corresponding curvature is
$\mathcal{F}_{\text{U}}=\dif \mathcal{A}_{\text{U}}
+\mathcal{A}_{\text{U}}\wedge \mathcal{A}_{\text{U}}$. For a contractible
loop, the non-Abelian Stokes theorem gives, schematically,
$\mathcal{U}_{\text{U}}(\partial S)=\mathcal{P}_S\exp\bigl(-\mathlarger{\int}_S \mathcal{F}_{\text{U}}\bigr)$.
Since $\mathcal{D}_k^N\simeq\operatorname{Gr}(k,N)$ is simply connected, a
flat
connection would imply trivial holonomy for all loops. All nontrivial phases
in
the examples below therefore arise from genuine curvature; the topological
information is carried by the integer-valued Chern classes of the bundle, as
discussed in Sec.~\ref{topology}.

\subsection{Bures metric and horizontal lifts}
\label{bures}

The Bures distance between two density matrices is defined through the
minimal
Hilbert--Schmidt distance between their purifications,
\begin{equation}
\dif s_B^2(\rho_1,\rho_2)
=\inf_{W_{1,2}}\tr\left[(W_1-W_2)(W_1-W_2)^\dagger\right].
\end{equation}
For $\rho_1,\rho_2\in\mathcal{D}_k^N$, the infimum may be restricted to
minimal purifications $W_1,W_2\in\mathbb{C}^{N\times k}$, since any
purification of a rank-$k$ state can be compressed to a minimal one without
increasing the distance. The infimum is attained when $W_1$ and $W_2$ are
\emph{parallel} in Uhlmann's sense, i.e.,
$W_1^\dagger W_2=W_2^\dagger W_1\ge0$, yielding the Uhlmann fidelity
\begin{equation}
F(\rho_1,\rho_2)=\tr\sqrt{\sqrt{\rho_1}\rho_2\sqrt{\rho_1}},
\end{equation}
and $\dif s_B^2=2-2F$. The detailed derivation is given in
Appendix~\ref{app:bures}.

For an infinitesimal displacement $\dif\rho$, the Bures line element on the
fixed-rank stratum is
\begin{equation}
\dif s_B^2
=\frac{1}{2}\sum_{\lambda_i+\lambda_j>0}
\frac{|\langle i|\dif\rho|j\rangle|^2}{\lambda_i+\lambda_j}.
\label{bures_metric}
\end{equation}
Here the sum includes all eigenstates with $\lambda_i+\lambda_j>0$, i.e.,
both
support-support and support-kernel blocks. The kernel-kernel block is
excluded
because rank-changing directions are orthogonal to $\mathcal{D}_k^N$. The
support-kernel contribution is essential: support variations produce nonzero
matrix elements between support and kernel eigenvectors, encoding the full
Grassmannian geometry. The derivation of Eq.~\eqref{bures_metric} and its
equivalence with the minimal purification distance are given in
Appendix~\ref{app:bures}.

Equivalently, for a curve $\gamma(t)\subset\mathcal{D}_k^N$ with horizontal
lift $\tilde\gamma(t)\subset\mathcal{P}_k$,
\begin{equation}
\int_\gamma \dif s_B
=\int_{\tilde\gamma}\sqrt{\operatorname{Re}\tr(\dot W^\dagger \dot W)}\,\dif t,
\end{equation}
confirming compatibility of the fixed-rank Uhlmann bundle with the metric
structure of mixed states.

\subsection{Reductions and limiting cases}
\label{reductions}

\subsubsection{Rank-one: Berry phase}

For $k=1$, the density matrix is pure, $\rho=|\psi\rangle\langle\psi|$. The
only eigenvalue is $\lambda_1=1$, and Eq.~\eqref{AU_explicit} gives
\begin{equation}
\mathcal{A}_{\text{U}}=\langle\psi|\dif\psi\rangle=\mathcal{A}_{\mathrm{Berry}}.
\end{equation}
The principal bundle \eqref{principal_bundle} becomes the Hopf bundle
$U(1)\hookrightarrow S^{2N-1}\to\mathbb{C}P^{N-1}$, and the Uhlmann phase
reduces exactly to the Berry phase.

To see this explicitly, note that for $k=1$ the minimal purification
$W\in\mathbb{C}^{N\times 1}$ is a column vector $|\psi\rangle$, the
eigenframe
$E$ reduces to $|1\rangle=|\psi\rangle$, and $r=1$, $\sqrt{r}=1$. The frame
connection is $K=\langle\psi|\dif\psi\rangle$. Inserting these into
Eq.~\eqref{AU_explicit}, the only component is
\begin{equation}
(\mathcal{A}_{\text{U}})_{11}
= \frac{2\sqrt{\lambda_1\lambda_1}}{\lambda_1+\lambda_1}
\langle 1|\dif 1\rangle
= \langle\psi|\dif\psi\rangle.
\end{equation}
The Uhlmann parallel-transport condition $W^\dagger \dot W = \dot W^\dagger
W$ becomes
$\langle\psi|\dot\psi\rangle = \langle\dot\psi|\psi\rangle$.
Since $\langle\psi|\psi\rangle=1$, this reduces to
the standard Berry condition $\langle\psi|\dot\psi\rangle=0$. The holonomy
is
then
\begin{equation}
\mathcal{U}_{\text{U}}(C)
=\exp\left(-\oint_C \langle\psi|\dif\psi\rangle\right)
=\me^{\mi\gamma_{\text{B}}},
\end{equation}
which is precisely the Berry phase factor. The Uhlmann phase
$\theta_{\text{U}}=\arg\tr_1[r(0)\mathcal{U}_{\text{U}}(C)]$ now reduces to
$\theta_{\text{U}}=\arg\me^{\mi\gamma_{\text{B}}}=\gamma_{\text{B}}$. This
establishes the complete
reduction to Berry geometry.

\subsubsection{Equal-weight projectors: Wilczek--Zee phase}

Consider an $m$-dimensional subspace with equal weights, i.e.,
$\rho=\frac{1}{m}P$ with $P$ a rank-$m$ projector ($P^2=P$). Then
$\lambda_a=1/m$ for all $a=1,\dots,m$, and the coefficient in
Eq.~\eqref{AU_explicit} becomes
$\frac{2\sqrt{\lambda_a\lambda_b}}{\lambda_a+\lambda_b}=1$. Thus
\begin{equation}
\mathcal{A}_{\text{U}}=E^\dagger \dif E=A_{\mathrm{WZ}},
\end{equation}
which is precisely the Wilczek--Zee non-Abelian Berry connection for the
$m$-fold degenerate subspace. The holonomy $\mathcal{U}_{\text{U}}(C)$ becomes
the Wilczek--Zee unitary matrix in $U(m)$.

We emphasize that this equality is exact within the fixed-rank stratum, in
contrast to the $T\to0$ correspondence between the full-rank Uhlmann phase
and the scalar Wilczek--Zee phase~\cite{bdpw-856t}, which is only
conditional: the zero-temperature limit of the full-rank connection retains a
coupling between different energy levels, whose holonomy carries a
$\pi_1(U(N))$ winding, whereas the minimal $U(k)$ connection here acts only
on the $k$-dimensional support and carries no kernel index, so no such
obstruction arises.

\subsubsection{Full-rank limit: standard Uhlmann connection}

We now verify that for $k=N$ our fixed-rank formula reduces to the standard
Uhlmann connection in the canonical gauge. Let $\rho$ be full rank and choose
an $N\times N$ unitary eigenframe $E$. In our gauge the section is
$s=E\sqrt{\Lambda}$, where $\Lambda=\diag(\lambda_1,\dots,\lambda_N)$. The
standard Uhlmann connection $\mathcal{A}_{\text{U}}^{(\sqrt\rho)}$ in the
canonical gauge $\sigma=\sqrt{\rho}=E\sqrt{\Lambda}E^\dagger$
satisfies~\cite{Guo20}
\begin{equation}
\rho \mathcal{A}_{\text{U}}^{(\sqrt\rho)}
+ \mathcal{A}_{\text{U}}^{(\sqrt\rho)}\rho
= [\sqrt{\rho},\dif\sqrt{\rho}],
\label{canonical_eq}
\end{equation}
or, in components,
\begin{equation}
(\mathcal{A}_{\text{U}}^{(\sqrt\rho)})_{ab}
= -\frac{(\sqrt{\lambda_a}-\sqrt{\lambda_b})^2}{\lambda_a+\lambda_b}
\langle a|\dif b\rangle,
\end{equation}
which follows directly from Eq.~\eqref{eq:full-rank-AU}.
The connection in our gauge, $\mathcal{A}_{\text{U}}^{(E)}$, is related to
the
canonical one by the gauge transformation generated by $E$:
\begin{equation}
\mathcal{A}_{\text{U}}^{(E)}
= E^\dagger \mathcal{A}_{\text{U}}^{(\sqrt\rho)} E + E^\dagger \dif E.
\end{equation}
Inserting the component expressions, we obtain
\begin{equation}
(\mathcal{A}_{\text{U}}^{(E)})_{ab}
= -\frac{(\sqrt{\lambda_a}-\sqrt{\lambda_b})^2}{\lambda_a+\lambda_b}
\langle a|\dif b\rangle
+ \langle a|\dif b\rangle.
\end{equation}
The sum simplifies as $-\frac{(\sqrt{\lambda_a}-\sqrt{\lambda_b})^2}{\lambda_a+\lambda_b} + 1
= \frac{2\sqrt{\lambda_a\lambda_b}}{\lambda_a+\lambda_b}$.
Hence
\begin{equation}
(\mathcal{A}_{\text{U}}^{(E)})_{ab}
= \frac{2\sqrt{\lambda_a\lambda_b}}{\lambda_a+\lambda_b}
\langle a|\dif b\rangle,
\end{equation}
which is exactly Eq.~\eqref{AU_explicit}. Therefore the fixed-rank theory
with
$k=N$ is gauge-equivalent to the standard Uhlmann theory.

These three limits establish the unified hierarchy:
\begin{equation}
\begin{array}{ccc}
k=1 & \rightarrow & U(1)\ \text{Berry geometry},\\[2pt]
1<k<N & \rightarrow & U(k)\ \text{fixed-rank Uhlmann geometry},\\[2pt]
\substack{\text{$m$-fold degenerate}\\\text{subspace}} & \rightarrow & U(m)\ \text{Wilczek--Zee connection},\\[2pt]
k=N & \rightarrow & U(N)\ \text{faithful Uhlmann geometry}.
\end{array}
\end{equation}

\section{Topology of the fixed-rank Uhlmann bundle}
\label{topology}

For $k=N$, the base manifold of faithful states is contractible and the
Uhlmann bundle is topologically trivial. For fixed rank $k<N$, the base
manifold is homotopy equivalent to the Grassmannian and the bundle can
acquire
nontrivial characteristic classes. In this section we analyze this topology
and its physical implications.

\subsection{Homotopy equivalence to the Grassmannian}

A key simplification is that $\mathcal{D}_k^N$ deformation-retracts onto the
submanifold of equal-weight projectors, which is diffeomorphic to
$\operatorname{Gr}(k,N)$. Let $\rho=ErE^\dagger$ with $E$ an orthonormal
frame
for $\operatorname{supp}\rho$ and $r=\diag(\lambda_1,\dots,\lambda_k)>0$.
Define
\begin{equation}
\rho_t=E\left[(1-t)r+\frac{t}{k}I_k\right]E^\dagger,\qquad 0\le t\le1.
\end{equation}
For all $t$, $\rho_t$ is positive semidefinite with unit trace and rank
exactly $k$; at $t=1$ it equals $\frac{1}{k}EE^\dagger$. Hence
\begin{equation}
\mathcal{D}_k^N\simeq \operatorname{Gr}(k,N),
\label{homotopy}
\end{equation}
so the topology of the fixed-rank Uhlmann bundle is controlled by the
tautological bundle over the Grassmannian. Physically, this means that all
nontrivial topology of the full fixed-rank bundle originates from the
variation of the supporting subspace, i.e., from the occupied-band geometry,
and not from the spectral weights.

\subsection{Tautological bundle and higher Chern classes}

On the equal-weight submanifold $\rho=\frac{1}{k}P$, the minimal
purification
satisfies $WW^\dagger=\frac{1}{k}P$, so $W=\frac{1}{\sqrt{k}}E\tilde U$ with
$\tilde U\in U(k)$. The bundle \eqref{principal_bundle} restricts to the
Stiefel bundle
\begin{equation}
U(k)\hookrightarrow V_k(\mathbb{C}^N)\longrightarrow\operatorname{Gr}(k,N),
\end{equation}
the unitary frame bundle of the tautological bundle
$\mathcal{S}_k\to\operatorname{Gr}(k,N)$, which is nontrivial for $0<k<N$.

From Eq.~\eqref{AU_explicit}, $(\mathcal{A}_{\text{U}})_{aa}
=\langle a|\dif a\rangle$, so summing over $a$ gives
$\tr \mathcal{A}_{\text{U}}=\tr K$ with $K=E^\dagger\dif E$. Taking the
exterior derivative,
\begin{equation}
\tr \mathcal{F}_{\text{U}}
=\dif \tr \mathcal{A}_{\text{U}}
=\tr(\dif K+K\wedge K)=\tr \mathcal{F}_K,
\end{equation}
and therefore
\begin{equation}
c_1(\text{Uhlmann bundle})=c_1(\mathcal{S}_k).
\end{equation}
For $k=1$, $\mathcal{S}_1=\mathcal{O}(-1)$ and $c_1=-1$, reproducing the
Hopf-bundle topology.

For $k\ge2$, higher Chern classes may appear. The total Chern class is
$c(\mathcal{S}_k)=1+c_1(\mathcal{S}_k)+\dots+c_k(\mathcal{S}_k)$, and all
characteristic classes of $\mathcal{S}_k$ are inherited by the Uhlmann
bundle. In particular, the second Chern number
\begin{equation}
C_2=\frac{1}{8\pi^2}\int_{M}\bigl[\tr(\mathcal{F}_{\text{U}}\wedge \mathcal{F}_{\text{U}})-\tr \mathcal{F}_{\text{U}}\wedge\tr \mathcal{F}_{\text{U}}\bigr]
\end{equation}
can be nonzero on a four-dimensional parameter manifold $M$, as shown below;
whenever $\tr\mathcal{F}_{\text{U}}=0$, as in the Yang monopole below, it
reduces to $\frac{1}{8\pi^2}\mathlarger{\int}_M\tr(\mathcal{F}_{\text{U}}\wedge\mathcal{F}_{\text{U}})$.
When the base manifold is a Brillouin zone torus and the state is an
equal-weight occupied projector, the Uhlmann Wilson loop along one cycle
reduces to the non-Abelian Berry--Wilczek--Zee Wilson loop, whose winding
reproduces $c_1$ and hence the quantized Hall response; when the parameter
space is four-dimensional, $C_2$ instead controls the quantized nonlinear
response of a four-dimensional quantum Hall system.

\subsection{Spectral degeneracy and the factorization of Chern classes}
\label{factorization}

\subsubsection{Factorization for non-degenerate spectra}

For non-degenerate spectrum, the off-diagonal factors
$2\sqrt{\lambda_a\lambda_b}/(\lambda_a+\lambda_b)$ are strictly less than
one,
so the Uhlmann connection differs from the Wilczek--Zee connection; its
characteristic classes, however, obey a rigid constraint.

\begin{proposition}\label{prop:splitting}
Let $\rho(p)$, $p\in M$, be a smooth family in $\mathcal{D}_k^N$ with
everywhere non-degenerate nonzero eigenvalues
$\lambda_1(p)>\dots>\lambda_k(p)>0$. Then the pullback of the Uhlmann bundle
to $M$ splits globally into eigenline bundles
\begin{equation}
L_a=\bigl\{(p,v)\;\bigl|\;v\in\ker(\rho(p)-\lambda_a(p))\bigr\},
\quad E=\bigoplus_{a=1}^k L_a,
\end{equation}
and
\begin{equation}
c(E)=\prod_{a=1}^k\bigl(1+c_1(L_a)\bigr),
\;
c_2(E)=\sum_{1\le a<b\le k}c_1(L_a)\,c_1(L_b).
\end{equation}
\end{proposition}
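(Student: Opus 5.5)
The plan has three steps: identify the pulled-back Uhlmann bundle with a vector bundle, show that this vector bundle splits into eigenlines, and then apply the Whitney sum formula.

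\emph{Step 1: the associated vector bundle.} The principal $U(k)$-bundle $\mathcal{P}_k\to\mathcal{D}_k^N$ has an associated rank-$k$ Hermitian vector bundle, and this is the support bundle $\operatorname{supp}\rho\subset\mathcal{H}$. To see this, send $[W,v]\in\mathcal{P}_k\times_{U(k)}\mathbb{C}^k$ to $Wv\in\operatorname{supp}\rho$. The map is well defined because $[WV,V^\dagger v]$ has the same image. It is a fiberwise isomorphism because $W$ has full column rank and range $\operatorname{supp}\rho$, by the polar decomposition of Sec.~\ref{IIC}. Pulling back along $\rho:M\to\mathcal{D}_k^N$ identifies the pullback of the Uhlmann bundle with
\begin{equation}
E=\{(p,v)\mid v\in\operatorname{supp}\rho(p)\}\subset M\times\mathbb{C}^N.
\end{equation}
Its Chern classes are by definition those of the pulled-back principal bundle.

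\emph{Step 2: global splitting.} Since the nonzero eigenvalues are everywhere simple and the kernel eigenvalue $0$ stays separated from them, each $\lambda_a(p)$ is an isolated simple eigenvalue. Hence $\lambda_a$ depends smoothly on $p$. The spectral projector
\begin{equation}
P_a(p)=\frac{1}{2\pi\mi}\oint_{\Gamma_a}(z-\rho(p))^{-1}\,\dif z
\end{equation}
is smooth and of constant rank one. Here $\Gamma_a$ is a small circle around $\lambda_a(p)$, and it can be kept locally fixed because the spectral gaps are open conditions. Therefore $L_a=\operatorname{im}P_a$ is a smooth complex line subbundle of the trivial bundle $M\times\mathbb{C}^N$.

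The $P_a$ are mutually orthogonal and satisfy $\sum_a P_a=P_\rho$. This gives the internal orthogonal direct sum $E=\bigoplus_a L_a$, defined globally over $M$. No choice of eigenvectors is needed, which is exactly why the splitting is global even when the local sections $s=E\sqrt r$ of Sec.~\ref{direct_derivation} are not.

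\emph{Step 3: Whitney sum.} The Whitney product formula gives
\begin{equation}
c(E)=\prod_a c(L_a)=\prod_a\bigl(1+c_1(L_a)\bigr).
\end{equation}
Extracting the degree-four part yields the stated $c_2(E)$.

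\emph{Consistency check at the level of forms.} This step is not logically needed, but it is worth doing. The Uhlmann connection of Eq.~\eqref{AU_explicit} is not the direct-sum connection $\bigoplus_a\langle a|\dif a\rangle$, because its off-diagonal entries are nonzero. Both connections live on the same bundle $E$, so Chern--Weil theory makes their Chern forms differ by exact forms, and the cohomology classes coincide. In particular, $\tr\mathcal{F}_{\text{U}}=\sum_a\dif\langle a|\dif a\rangle$, matching $c_1(E)=\sum_a c_1(L_a)$ as already noted.

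\emph{Main obstacle.} The delicate point is Step 1 together with the smoothness in Step 2. One must check that the characteristic classes defined via $\mathcal{F}_{\text{U}}$ on the principal bundle are the classes of the support vector bundle $E$. This requires that $\mathcal{P}_k$ is, up to the scaling $h=W^\dagger W$, a reduction of the frame bundle of $E$. One way is to retract $\mathcal{P}_k$ onto orthonormal frames via the polar factor $\mathcal{U}=\sqrt{\rho^+}W$. Another is to use the explicit isomorphism of Step 1. Once that identification is made, the rest is standard perturbation theory for the eigenlines and the Whitney formula.
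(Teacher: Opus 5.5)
Your proposal is correct and follows the paper's own argument: the contour-integral spectral projectors $P_a(p)$ are defined locally but agree on overlaps, so they give global line subbundles $L_a$ with $E=\bigoplus_a L_a$, and the Chern-class formula is then the Whitney sum formula. Your Step 1 identifies $\mathcal{P}_k\times_{U(k)}\mathbb{C}^k$ with the support bundle via $[W,v]\mapsto Wv$, or equivalently $\mathcal{P}_k$ with the unitary frame bundle of $E$ via $W\mapsto\sqrt{\rho^+}W$; this makes explicit an identification the paper uses tacitly, and is a useful addition rather than a different route.
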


\begin{proof}
Since the eigenvalues are distinct, each spectral projector is obtained by
choosing locally a contour inside the spectral gap (for compact $M$ a uniform
gap permits a single fixed $\varepsilon$) in the contour integral
\begin{equation}
P_a(p)=\frac{1}{2\pi\mi}\oint\frac{\dif z}{z-\rho(p)};
\end{equation}
the resulting local spectral projectors agree on overlaps and therefore define
global smooth line subbundles $L_a$ spanning the support at every point. The
Chern class formula is the Whitney sum formula~\cite{Nakahara}.
\end{proof}

This is a literal realization of the splitting principle: the formal Chern
roots are the Berry Chern classes of the eigenstates~\cite{Nakahara,Kato}.
Hence with non-degenerate spectrum the higher Chern classes carry no
information beyond the Berry phases.
In particular, if
$H^2(M,\mathbb{Z})=0$, as for $M=S^4$, then $c_1(L_a)=0$ and $c_2(E)=0$: a
non-factorizable second Chern number \emph{requires} failure of the global
eigenline splitting, which in the present spectral setting requires
degeneracy. This sharpens the physical picture above: nontrivial higher Chern
topology is a property of the degenerate occupied subspace, not of the
spectral weights.

\emph{Example: the simplest non-degenerate family with nonzero $C_2$.}
Let $M=S^2_p\times S^2_q$ and $\mathcal{H}=\mathbb{C}^4
=\mathbb{C}^2\oplus\mathbb{C}^2$, with
\begin{equation}
|e_1(p,q)\rangle=\bigl(u(p),\,0\bigr),\quad
|e_2(p,q)\rangle=\bigl(0,\,v(q)\bigr),
\end{equation}
where $u(p),v(q)$ are Bloch spinors on the two spheres. Set
\begin{equation}
\rho(p,q)=\lambda_1|e_1\rangle\langle e_1|
+\lambda_2|e_2\rangle\langle e_2|
\end{equation}
with $\lambda_1>\lambda_2>0$ and $\lambda_1+\lambda_2=1$, which is
a rank-2 state with everywhere non-degenerate spectrum. Since
$\langle e_1|\dif e_2\rangle=0$, Eq.~\eqref{AU_explicit} gives
\begin{align}
\mathcal{A}_{\text{U}}
=\diag\!\left[
\frac{\mi}{2}(1-\cos\theta_p)\dif\phi_p,\;
\frac{\mi}{2}(1-\cos\theta_q)\dif\phi_q
\right],
\notag\\
\mathcal{F}_{\text{U}}
=\diag\!\left[
\frac{\mi}{2}\sin\theta_p\,\dif\theta_p\wedge\dif\phi_p,\;
\frac{\mi}{2}\sin\theta_q\,\dif\theta_q\wedge\dif\phi_q
\right],
\end{align}
independent of the weights. With $c_1(L_1)=-1$ on $S^2_p$ and
$c_1(L_2)=-1$ on $S^2_q$, so $c_1(E)\neq0$, the full Chern--Weil expression
gives
\begin{align}
C_2
&=\frac{1}{8\pi^2}\int_{S^2\times S^2}
\bigl[\tr(\mathcal{F}_{\text{U}}\wedge \mathcal{F}_{\text{U}})
-\tr \mathcal{F}_{\text{U}}\wedge\tr \mathcal{F}_{\text{U}}\bigr]\notag\\
&=c_1(L_1)\,c_1(L_2)=1,
\end{align}
saturating Proposition~\ref{prop:splitting}. The integration details are in
Appendix~\ref{app:s2xs2}.

\subsubsection{The full-rank case $k=N$}

The proposition applies verbatim at $k=N$, where its content changes
character. The full-rank bundle is trivial, since
$\sigma(\rho)=\sqrt{\rho}$ is a global section; every pullback is therefore
trivial and $c(E)=1$. Writing $x_a=c_1(L_a)$, the factorization formula now
requires all elementary symmetric polynomials of the eigenline classes to
vanish,
\begin{align}
&e_1=\sum_{a=1}^N x_a=0,\quad
e_2=\sum_{1\le a<b\le N}x_a x_b=0,\quad\dots,\notag\\
&e_N=x_1\cdots x_N=0,
\end{align}
so that the formula ceases to be a source of topology and becomes a set of
cancellation constraints on the spectral data. The individual $x_a$ may
still carry pure-state Berry topology; only their total package must
trivialize.

\emph{Example: a full-rank family with nontrivial eigenlines.} The family
\begin{equation}
\rho(\mathbf{n})=\frac12\bigl(I+r\,\hat{\mathbf{n}}\cdot\bm\sigma\bigr),
\qquad M=S^2,\quad 0<r<1,
\end{equation}
has non-degenerate eigenvalues $(1\pm r)/2$ and eigenlines with
$c_1(L_\pm)=\mp1$, while
\begin{equation}
E=L_+\oplus L_-=S^2\times\mathbb{C}^2
\end{equation}
is trivial. Accordingly $e_1=x_++x_-=0$, and $c_2(E)=x_+\smile x_-=0$:
here $\smile$ is the cup product that multiplies two degree-two classes
into a degree-four class, and it vanishes identically because
$H^4(S^2,\mathbb{Z})=0$, in agreement with $c(E)=1$. The formal product
of the two Chern numbers, $-1$, is not $c_2(E)$ and carries no cohomological
meaning here. On a four-dimensional base the constraints bite further:
eigenlines $L$ and $L^{-1}$ with $x^2\neq0$ would give
$c_2(E)=-x^2\neq0$, contradicting the triviality of $E$, which at $k=N$ is
the pullback of the trivial full-rank Uhlmann bundle; such a spectral
geometry is therefore unattainable.

Two conclusions follow for the fixed-rank theory. First, the novelty of the
non-degenerate Uhlmann connection lies in its weight-dependent
\emph{holonomy matrices}, not in its characteristic classes, which are
already fixed by the eigenlines. Second, non-factorizable higher topology
requires spectral degeneracy; the equal-weight Yang monopole of
Sec.~\ref{yang} is the minimal realization, where
$H^2(S^4,\mathbb{Z})=0$ excludes any factorization and the topology is
detected solely by $C_2=\pm1$. This is also the case of direct physical
interest, since $C_2$ governs the quantized response of a four-dimensional
quantum Hall system and can be accessed through the non-Abelian Wilson loop
of the degenerate occupied subspace.

\subsection{Geometric versus topological character of the Uhlmann phase}
\label{phase_vs_topology}

We finally clarify the topological status of the scalar Uhlmann phase.
In general $\theta_{\text{U}}=\arg\tr[r(0)\mathcal{U}_{\text{U}}(C)]$ is
gauge invariant but geometric: it varies continuously with the loop and is
therefore in general not quantized. The underlying holonomy
$\mathcal{U}_{\text{U}}(C)$, by contrast, becomes
homotopy invariant precisely when the connection is flat,
$\mathcal{F}_{\text{U}}\equiv0$: homotopic loops $C_1,C_2$ then bound a
surface $\Sigma$, and the non-Abelian Stokes theorem gives
$\mathcal{U}_{\text{U}}(C_1)\mathcal{U}_{\text{U}}(C_2)^{-1}
=\mathcal{P}\exp(-\mathlarger{\int}_\Sigma\mathcal{F}_{\text{U}})=I_k$;
conversely, a point of nonzero curvature produces a small loop with
nonidentity holonomy. On the simply connected base
$\mathcal{D}_k^N\simeq\operatorname{Gr}(k,N)$, a flat connection would
render all holonomies trivial, so a nonvanishing Uhlmann phase necessarily
requires nonvanishing curvature and is generically geometric rather than
topological. The topological content of the theory resides instead in the
characteristic classes of the bundle, as the Chern data of
Secs.~\ref{example_phase} and \ref{yang} demonstrate.

Quantization is nevertheless not foreign to the Uhlmann phase. In much of
the full-rank literature, $\theta_{\text{U}}$, restricted to particular
loops and symmetries, takes only the values $0$ and $\pi$, and its jump as a
control parameter crosses a critical value has been studied under the name
of a finite-temperature topological phase
transition~\cite{Viyuela14,UP2D15}, with the sign reversal of the overlap
interpreted as the parallel-transported purification switching from a
M\"obius-strip-like, antiparallel return to a cylinder-like, parallel one.
The terminology is operationally motivated: $\theta_{\text{U}}$ is a direct
function of the Uhlmann holonomy, the mixed-state descendant of the Berry
holonomy, and its quantized value below the transition is inherited from the
zero-temperature topology, the jump thus diagnosing the destruction of a
topological signature. In the strict sense, however, no topological
invariant of the bundle changes at the transition, which is instead a node
of the overlap amplitude; we therefore describe such events as geometric
phase transitions and reserve the word topology for the characteristic
classes.

Even within the geometric domain, however, the phase may still reflect part
of the bundle's topology, and the question is how much. This is the Berry
analogue: can $\theta_{\text{U}}$ be proportional to the first Chern number?
The bridge is
$\tr\mathcal{F}_{\text{U}}=\tr \mathcal{F}_K$, i.e.,
$c_1(\text{Uhlmann bundle})=c_1(\mathcal{S}_k)$. For $k=1$ the answer is
affirmative: $\theta_{\text{U}}$ reduces to the Berry phase
$\gamma_{\text{B}}=\mathlarger{\int}_\Sigma F$, which, for the rotationally
symmetric representatives and equatorial loops considered below, equals
$\pi c_1$ modulo $2\pi$; for a generic loop or curvature distribution the
Berry phase is a flux through a spanning surface and is not fixed by $c_1$
alone. For $k>1$ with an Abelian connection within $U(k)$, as in the
rank-2 example of Sec.~\ref{example_phase}, the same statement survives,
again for the symmetry-adapted loop of that construction, in
$\mathbb{Z}_2$ form: $\theta_{\text{U}}$ is restricted to $\{0,\pi\}$,
reflecting the parity of the twisted eigenline's Berry holonomy. For
genuinely non-Abelian families with $k\ge2$, however, $\theta_{\text{U}}$ is
a highly compressed gauge-invariant functional of the full holonomy: it may
vary with the non-Abelian curvature, as the Yang monopole below illustrates,
but it retains too little information to reconstruct the holonomy matrix or
the higher characteristic classes.

Beyond the first Chern class, higher invariants such as the second Chern
number of the Yang monopole are likewise invisible to the scalar phase; the
non-Abelian Wilson loop is their natural probe, its determinant's winding
reproducing $c_1$ in two dimensions, while a precise extraction of $C_2$
from Wilson-loop data lies beyond the scope of this work. Within the
symmetry-adapted constructions above, the scalar phase thus carries no more
than a $\mathbb{Z}_2$ shadow of the first Chern class; the integer topology
of the fixed-rank bundle lives in its characteristic classes and Wilson
loops.

\section{Examples}
\label{examples}

To make the abstract framework concrete, we study several examples
illustrating different facets of the fixed-rank Uhlmann geometry. Each
highlights a distinct feature: spectral dependence of an Abelian phase;
non-Abelian holonomy from a moving support with unequal eigenvalues; a
solvable dissipative orbit; and a rank-2 Yang monopole with quantized second
Chern number. Together they show that the fixed-rank theory contains the
known pure-state and equal-weight limits while admitting new mixed-state
geometric and topological structures.

\subsection{Uhlmann phase for a rank-2 state on $\mathbb{C}P^1$}
\label{example_phase}

We begin with an elementary rank-2 example that illustrates the computation
of an Abelian Uhlmann phase. Let $N=3$, $k=2$, and choose two orthonormal
vectors in $\mathbb{C}^3$,
\begin{equation}
|e_1(\theta,\phi)\rangle
=\begin{pmatrix}
\cos\frac{\theta}{2}\\
\me^{\mi\phi}\sin\frac{\theta}{2}\\
0
\end{pmatrix},
\qquad
|e_2\rangle
=\begin{pmatrix}
0\\0\\1
\end{pmatrix}.
\end{equation}
Define $\rho(\theta,\phi)=\lambda_1|e_1\rangle\langle e_1|
+\lambda_2|e_2\rangle\langle e_2|$ with $\lambda_1,\lambda_2>0$ and
$\lambda_1+\lambda_2=1$. The support is a two-dimensional subspace varying
with $(\theta,\phi)$ as a $\mathbb{C}P^1$ embedded in
$\operatorname{Gr}(2,3)$.

Since $|e_2\rangle$ is constant, only the diagonal matrix element
$(\mathcal{A}_{\text{U}})_{11}=\langle e_1|\dif e_1\rangle
=\frac{\mi}{2}(1-\cos\theta)\dif\phi$ is nonzero. Thus
\begin{equation}
\mathcal{A}_{\text{U}}
=\frac{\mi}{2}(1-\cos\theta)\dif\phi
\begin{pmatrix}1&0\\0&0\end{pmatrix},
\end{equation}
which is Abelian inside $U(2)$ but produces a genuine mixed-state holonomy.
For the loop $\theta=\pi/2$, $\phi\in[0,2\pi]$, the holonomy is diagonal,
\begin{align}
\mathcal{U}_{\text{U}}(C)&=\exp\left(-\oint_C \mathcal{A}_{\text{U}}\right)=\exp\left[-\mi\pi \begin{pmatrix}1&0\\0&0\end{pmatrix}\right]\notag\\
&=\diag(\me^{-\mi\pi},1)=\diag(-1,1),
\end{align}
and the Uhlmann phase $\theta_{\text{U}}=\arg\tr_2[r(0)\mathcal{U}_{\text{U}}(C)]$
with $r(0)=\diag(\lambda_1,\lambda_2)$ gives
\begin{equation}
\theta_{\text{U}}=\arg(\lambda_2-\lambda_1)
=\begin{cases}
0, & \lambda_2>\lambda_1,\\
\pi, & \lambda_1>\lambda_2.
\end{cases}
\end{equation}
At $\lambda_1=\lambda_2$ the amplitude vanishes and the phase jumps by
$\pi$, signaling a phase transition driven purely by spectral
degeneracy. The resulting phase is plotted in Fig.~\ref{fig:Ea}.

\begin{figure}[ht]
\centering
\includegraphics[width=3.2in]{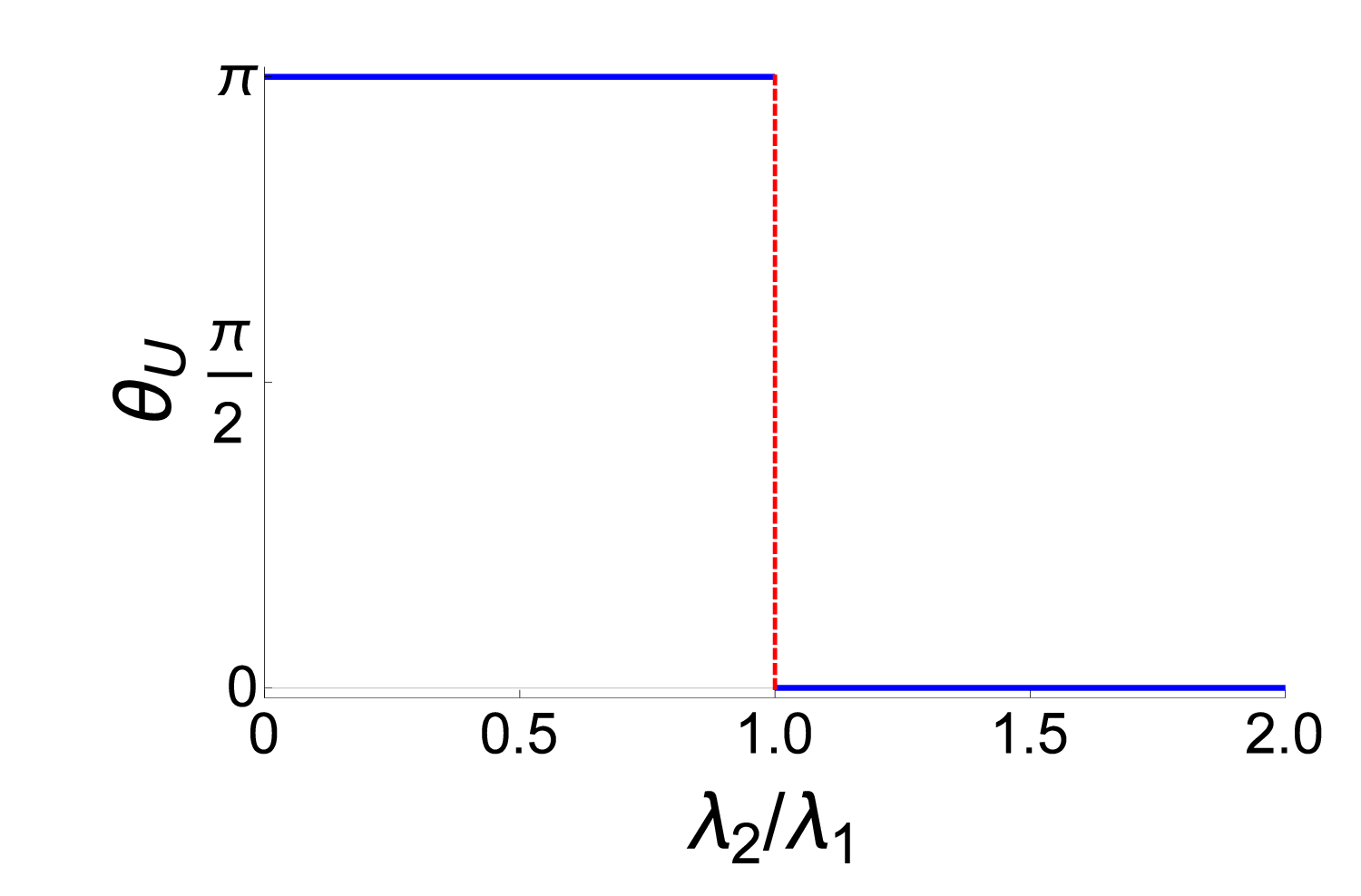}
\caption{Uhlmann phase $\theta_{\text{U}}$ as a function of the ratio
$\lambda_2/\lambda_1$ for the rank-2 state on $\mathbb{C}P^1$. The phase
jumps from $\pi$ to $0$ at $\lambda_1=\lambda_2$.}
\label{fig:Ea}
\end{figure}

Although the result appears elementary, it is instructive to ask what
geometric or topological information $\theta_{\text{U}}$ actually carries.
Topologically, $L_1=\operatorname{span}|e_1(\theta,\phi)\rangle$ is the
tautological line bundle over $\mathbb{C}P^1$ with $c_1(L_1)=-1$, while
$L_2=\operatorname{span}|e_2\rangle$ is trivial with $c_1(L_2)=0$. For the
equatorial loop of this rotationally symmetric representative, which encloses
one half of the total Chern flux, the Berry phase of each eigenline equals
\begin{equation}
\phi_{1,2} = \pi c_1(L_{1,2}) \pmod{2\pi},
\end{equation} a relation special to this geometry that will be used below.
The Uhlmann amplitude is
$\mathcal{Z}_{\text{U}}=\tr_2[r(0)\mathcal{U}_{\text{U}}(C)]$, and since
$\mathcal{U}_{\text{U}}(C)=\diag(\me^{\mi\phi_1},\me^{\mi\phi_2})$ with
$\me^{\mi\phi_a}=\me^{\mi\pi c_1(L_a)}=(-1)^{c_1(L_a)}$, it evaluates to
\begin{equation}
\mathcal{Z}_{\text{U}}
=(-1)^{c_1(L_1)}\lambda_1+(-1)^{c_1(L_2)}\lambda_2.
\end{equation}
Depending on the Chern parities, three cases arise for such
symmetry-adapted loops, where $c_1^{L_a}\equiv c_1(L_a)$:
\begin{center}
\small
\begin{tabular}{c c c}
\hline
$(c_1^{L_1},c_1^{L_2})\bmod 2$ & $\mathcal{Z}_{\text{U}}$ & $\theta_{\text{U}}$ \\
\hline
$(0,0)$ & $\lambda_1+\lambda_2=1>0$ & $0$, independent \\
$(1,1)$ & $-\lambda_1-\lambda_2=-1<0$ & $\pi$, independent \\
$(1,0)$ or $(0,1)$ & $\pm(\lambda_2-\lambda_1)$ & $0$ or $\pi$, by weights \\
\hline
\end{tabular}
\end{center}
Clearly, $\mathcal{Z}_{\text{U}}$ in the first two cases is independent of
the spectral weights. The present example belongs to the third case:
$c_1(L_1)=-1$ and $c_1(L_2)=0$, so
$\mathcal{Z}_{\text{U}}=\lambda_2-\lambda_1$, and the spectral weights
choose
between $0$ and $\pi$. Within this construction, $\theta_{\text{U}}$ does
not
see the integer value of $c_1$; it sees only the parity of the Berry
holonomy, and even that parity does not fix the phase uniquely when the two
parities differ. The Uhlmann phase is therefore not a topological invariant
here; it is a $\mathbb{Z}_2$-valued shadow of the first Chern class specific
to this symmetric setting, in contrast to the Berry phase itself, which for
the equatorial loop equals $\pi c_1$ modulo $2\pi$ and is independent of the
spectral weights. This is the simplest instance showing that the rank-2
Uhlmann phase can be nontrivial even for Abelian holonomy.

\begin{figure*}[ht]
\centering
\includegraphics[width=0.92\linewidth]{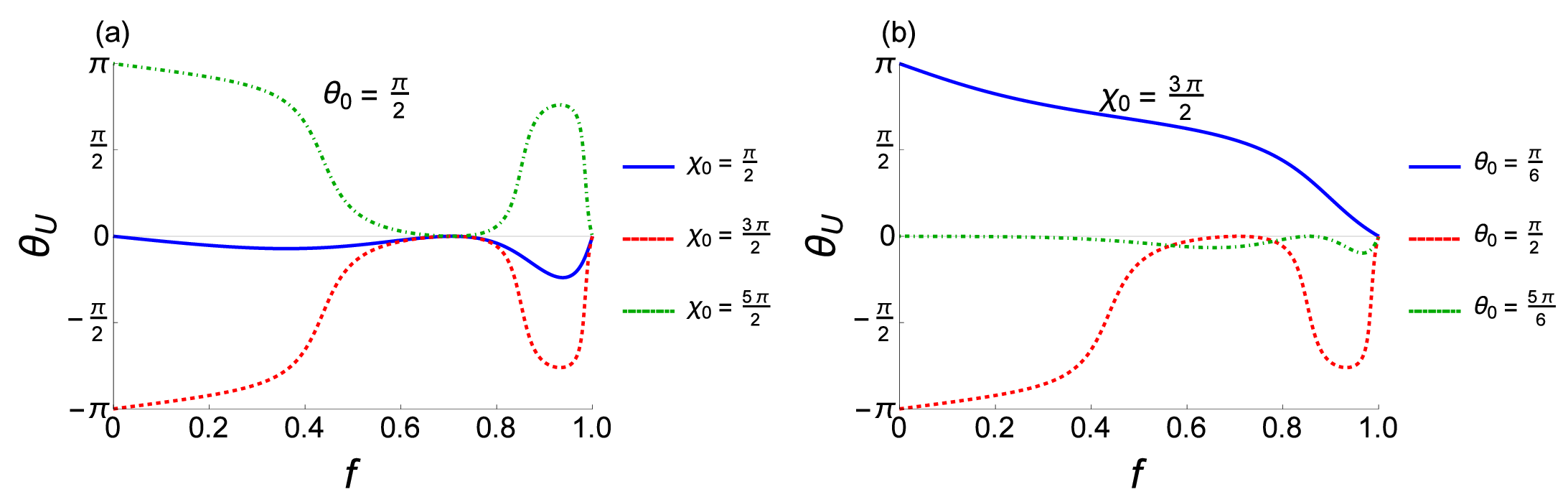}
\caption{The Uhlmann phase $\theta_{\text{U}}$  as a
function of the spectral factor $f$.
(a) Fixed $\theta_0=\pi/2$ and three values of $\chi_0$; (b) fixed
$\chi_0=3\pi/2$ and three values of $\theta_0$. The phase varies
continuously and non-monotonically with $f$, reaching values of order $\pi$
at intermediate weights, and vanishes at $f=1$ where
$\lambda_1=\lambda_2=\frac{1}{2}$.
}
\label{fig:Eb}
\end{figure*}
\subsection{Non-Abelian Uhlmann holonomy in a rank-2 $N=3$ model}
\label{nonabelian_example}

We now construct a model with genuinely non-Abelian Uhlmann holonomy and
unequal eigenvalues.

\subsubsection{Moving support and eigenframe}

Using the Gell-Mann matrices\begin{equation}
\lambda_5=
\begin{pmatrix}
0&0&-\mi\\
0&0&0\\
\mi&0&0
\end{pmatrix},
\qquad
\lambda_7=
\begin{pmatrix}
0&0&0\\
0&0&-\mi\\
0&\mi&0
\end{pmatrix},
\end{equation}and the diagonal generator
$\Lambda=\frac{1}{3}\diag(-1,-1,2)$, define the $SU(3)$ transformation
\begin{equation}
U(\phi,\chi,\theta)
=\me^{-\mi\phi\lambda_5/2}\me^{-\mi\chi\Lambda}\me^{-\mi\theta\lambda_7/2}.
\end{equation}
From the reference frame $E_0=(|1\rangle,|2\rangle)=\begin{pmatrix}
1&0\\
0&1\\
0&0
\end{pmatrix}$ we obtain the moving
eigenframe $E(\phi,\chi,\theta)=U(\phi,\chi,\theta)E_0$, whose orthonormal
columns are
\begin{equation}
|e_1\rangle
=\begin{pmatrix}
\me^{\mi\chi/3}\cos\frac{\phi}{2}\\
0\\
\me^{\mi\chi/3}\sin\frac{\phi}{2}
\end{pmatrix},
\;
|e_2\rangle
=\begin{pmatrix}
-\me^{-2\mi\chi/3}\sin\frac{\phi}{2}\sin\frac{\theta}{2}\\
\me^{\mi\chi/3}\cos\frac{\theta}{2}\\
\me^{-2\mi\chi/3}\cos\frac{\phi}{2}\sin\frac{\theta}{2}
\end{pmatrix}.
\end{equation}

\subsubsection{Frame connection and Uhlmann connection}

The frame connection $K=E^\dagger\dif E$ has components
\begin{align}
&K_\phi
=\frac{\sin\frac{\theta}{2}}{2}
\begin{pmatrix}0&-\me^{-\mi\chi}\\
\me^{\mi\chi}&0\end{pmatrix},
\quad
K_\theta=0,\notag\\
\qquad
&K_\chi
=\begin{pmatrix}
\frac{\mi}{3}&0\\
0&-\frac{\mi}{6}(1-3\cos\theta)
\end{pmatrix}.
\end{align}
In terms of Pauli matrices,
\begin{align}
K
&=\frac{\mi}{2}\sin\frac{\theta}{2}
\left(\sin\chi\,\sigma_x-\cos\chi\,\sigma_y\right)\dif\phi\notag\\
&\quad+\frac{\mi}{12}(1+3\cos\theta)I_2\,\dif\chi
+\frac{\mi}{4}(1-\cos\theta)\sigma_z\,\dif\chi.
\end{align}
For a rank-2 state
$\rho=\lambda_1|e_1\rangle\langle e_1|
+\lambda_2|e_2\rangle\langle e_2|=ErE^\dagger$ with
$\lambda_1\neq\lambda_2$ and $r=\diag(\lambda_1,\lambda_2)$,
Eq.~\eqref{AU_explicit} multiplies the off-diagonal
elements of $K$ by
$f=2\sqrt{\lambda_1\lambda_2}$ while leaving the diagonal ones unchanged:
\begin{align}
\mathcal{A}_{\text{U}}
&=\frac{\mi f}{2}\sin\frac{\theta}{2}
\left(\sin\chi\,\sigma_x-\cos\chi\,\sigma_y\right)\dif\phi
\notag\\&\quad+\frac{\mi}{12}(1+3\cos\theta)I_2\,\dif\chi
+\frac{\mi}{4}(1-\cos\theta)\sigma_z\,\dif\chi.
\end{align}
Since $[A_\phi,A_\chi]\neq0$ whenever $f\neq0$ and $\theta\neq0$, this
is a genuinely non-Abelian Uhlmann connection. This is the first explicit
demonstration that the fixed-rank Uhlmann holonomy goes beyond the
Wilczek--Zee one: although the connection differs from $E^\dagger\dif E$
only by spectral factors on the interlevel matrix elements, path ordering of
the resulting transport produces genuinely non-Abelian holonomies with no
Wilczek--Zee counterpart.

\subsubsection{Closed loop and holonomy}

Fixing $\theta=\theta_0$ and taking the rectangular loop
$(0,0)\to(4\pi,0)\to(4\pi,\chi_0)\to(0,\chi_0)\to(0,0)$ in the
$(\phi,\chi)$ plane, with $\phi$ defined modulo $4\pi$, we define
\begin{equation}
a=\frac{f}{2}\sin\frac{\theta_0}{2},
\quad
b=\frac{1+3\cos\theta_0}{12},
\quad
d=\frac{1-\cos\theta_0}{4}.
\end{equation}
Along the four segments the connection is constant, so the path-ordered
exponential reduces to ordinary exponentials:
\begin{align}
&U_1=\me^{\mi 4\pi a\sigma_y},\quad
U_2=\me^{-\mi\chi_0(bI_2+d\sigma_z)},\notag\\
&U_3=\me^{\mi 4\pi a(\sin\chi_0\sigma_x-\cos\chi_0\sigma_y)},\quad
U_4=U_2^{-1}.
\end{align}
The total holonomy is $\mathcal{U}_{\text{U}}(C)=U_4U_3U_2U_1$, and the
$bI_2$ factor cancels between $U_2$ and $U_4$. Conjugating with
$\me^{\pm\mi\chi_0 d\sigma_z}$ yields
\begin{equation}
\mathcal{U}_{\text{U}}(C)
=\me^{\mi 4\pi a(\sin\alpha\,\sigma_x-\cos\alpha\,\sigma_y)}
\me^{\mi 4\pi a\sigma_y},
\end{equation}
where $\alpha=\chi_0\frac{1+\cos\theta_0}{2}$.
With $\gamma=4\pi a$, the trace is
\begin{equation}
\tr\mathcal{U}_{\text{U}}(C)
=2\left[\cos^2\gamma+\sin^2\gamma\cos\alpha\right].
\end{equation}
For generic $\theta_0,\chi_0$ and $0<f<1$, this is neither $2$ nor $-2$, so
the holonomy is a nontrivial $SU(2)$ matrix; its non-Abelian character
resides in the connection itself, through $[A_\phi,A_\chi]\neq0$.

The Uhlmann phase follows from Eq.~\eqref{theta_U} with
$r(0)=\diag(\lambda_1,\lambda_2)$ in the initial eigenframe. The holonomy
matrix itself reads
\begin{equation}
\mathcal{U}_{\text{U}}(C)=
\begin{pmatrix}
\cos^2\gamma+\sin^2\gamma\, \me^{-\mi\alpha}&
\sin\gamma\cos\gamma\,(1-\me^{-\mi\alpha})\\[2pt]
\sin\gamma\cos\gamma\,(\me^{\mi\alpha}-1)&
\cos^2\gamma+\sin^2\gamma\, \me^{\mi\alpha}
\end{pmatrix},
\end{equation}
and its weighted trace takes the closed form
\begin{align}
\mathcal{Z}_{\text{U}}
&=\tr[r(0)\mathcal{U}_{\text{U}}(C)]\notag\\
&=\cos^2\gamma+\sin^2\gamma\cos\alpha
+\mi(\lambda_2-\lambda_1)\sin^2\gamma\sin\alpha.
\label{Z_nonabelian}
\end{align}
For generic parameters, $\mathcal{Z}_{\text{U}}$ is complex: the Uhlmann
phase is not quantized and varies continuously with the spectral weights and
the loop. This continuous variation is illustrated in
Fig.~\ref{fig:Eb}, which plots
$\theta_{\text{U}}$ against $f$ for representative loop parameters: the
phase traces a non-monotonic curve of order $\pi$ at intermediate weights.
For the representative loops shown in Fig.~\ref{fig:Eb}, the phase
approaches zero as $f\to1$, where $\lambda_1=\lambda_2=\frac{1}{2}$; in this
equal-weight limit the spectral dressing disappears and the Uhlmann
connection reduces to the Wilczek--Zee form $E^\dagger\dif E$.

$\theta_{\text{U}}$ reduces to a $\mathbb{Z}_2$-valued phase only
in the degenerate limit $\lambda_1\to\lambda_2$, where
$\mathcal{Z}_{\text{U}}=\tfrac12\tr\mathcal{U}_{\text{U}}(C)$ becomes real,
or for fine-tuned loops with $\sin\alpha=0$. The weight dependence of the
connection, and hence of the holonomy, enters through the off-diagonal
factor $f$; the scalar amplitude $\mathcal{Z}_{\text{U}}$ carries an
additional dependence on the initial weights through
$r(0)=\diag(\lambda_1,\lambda_2)$.

Topologically, this family realizes no invariant. Both eigenlines admit
global nonvanishing sections, $|e_1\rangle$ and $|e_2\rangle$ themselves, so
$L_1$ and $L_2$ are trivial line bundles with $c_1(L_1)=c_1(L_2)=0$, and the
Uhlmann bundle over this family is likewise trivial. This holds despite the
nonvanishing curvature $\mathcal{F}_{22}=-\frac{\mi}{2}\sin\theta\,\dif\theta\wedge\dif\chi$,
which is an exact two-form because its connection one-form
$K_{22}=\frac{\mi}{6}(3\cos\theta-1)\dif\chi$ is globally defined. All
spectral information enters through the scalar factor $f$ and affects only
geometric data: it cannot generate or modify any topological invariant. This
example is therefore purely geometric: its content is the weight-dependent
non-Abelian holonomy of a topologically trivial bundle, complementing the
topological examples of Secs.~\ref{example_phase} and \ref{yang}.

\subsection{Dissipatively driven fixed-rank orbit}
\label{lindblad_example}

Our third example is a genuinely dissipative model: a reservoir exchanges
population \emph{within} the two-dimensional support while a Hamiltonian
rotates the support itself. The dissipator acts nontrivially for generic times throughout
the cycle, in contrast to a decoherence-free subspace, while the rank is
preserved exactly, so that the reservoir participates directly in the
Uhlmann transport.

\subsubsection{Model and exact rank-2 solution}

On $\mathcal H=\mathbb C^3$ with orthonormal basis
$\{|1\rangle,|2\rangle,|3\rangle\}$, take the spin-1 generator
\begin{equation}
J_x=\frac{1}{\sqrt2}
\begin{pmatrix}
0&1&0\\
1&0&1\\
0&1&0
\end{pmatrix},
\end{equation}
whose eigenvalues are $-1$, $0$, $1$, and define
\begin{equation}
U(\phi)=\me^{-\mi\phi J_x},\quad \phi=\omega t,\quad
|u_a(\phi)\rangle=U(\phi)|a\rangle .
\end{equation}
Since $\me^{-\mi2\pi J_x}=\mathbb I$, the loop traversed in one period
$T=2\pi/\omega$ is strictly closed. The Hamiltonian is $H=\omega J_x$, and
the reservoir is modeled by two periodically modulated jump operators
\begin{align}
L_+(t)=\sqrt{\gamma_+(t)}\,|u_1(\phi)\rangle\langle u_2(\phi)|,\notag\\
L_-(t)=\sqrt{\gamma_-(t)}\,|u_2(\phi)\rangle\langle u_1(\phi)|.
\end{align}
Here $\gamma_\pm(t)=\Gamma\left[\frac12\pm\delta\cos(\omega t)\right]$ with $0<\delta<\frac12$, 
which are manifestly non-negative. We seek a solution of the Lindblad
equation
\begin{equation}
\dot\rho=-\mi[H,\rho]
+\sum_{s=\pm}\Bigl(L_s\rho L_s^\dagger
-\tfrac12\{L_s^\dagger L_s,\rho\}\Bigr)
\label{lindblad}
\end{equation}
of the form
\begin{equation}
\rho(t)=U(\phi)\,\diag[p(t),1-p(t),0]\,U^\dagger(\phi).
\label{rho_ansatz}
\end{equation}
Because $[J_x,U]=0$, the Hamiltonian contribution to $\partial_t\rho$
reproduces the frame rotation exactly,
\begin{equation}
\partial_t\rho
=-\mi[H,\rho]
+U\,\diag(\dot p,-\dot p,0)\,U^\dagger,
\end{equation}
while the dissipators, evaluated with
$\langle u_1|\rho|u_1\rangle=p$ and $\langle u_2|\rho|u_2\rangle=1-p$, give
\begin{align}
\mathcal D[L_+]\rho
&=\gamma_+(1-p)\bigl(|u_1\rangle\langle u_1|-|u_2\rangle\langle u_2|\bigr),\notag\\
\mathcal D[L_-]\rho
&=-\gamma_-p\bigl(|u_1\rangle\langle u_1|-|u_2\rangle\langle u_2|\bigr).
\label{diss_eval}
\end{align}
The Lindblad equation \eqref{lindblad} is therefore equivalent to the
scalar population equation
\begin{equation}
\dot p=\gamma_+(1-p)-\gamma_-p
=-\Gamma\bigl[p-p_{\rm eq}(t)\bigr],
\label{pop_eq}
\end{equation}
where $p_{\rm eq}(t)=\frac12+\delta\cos\omega t$. 
Two comments are in order. First, Eq.~\eqref{diss_eval} shows that the
dissipator alone carries the entire spectral dynamics,
$\sum_s\mathcal D[L_s]\rho
=\dot p\,(|u_1\rangle\langle u_1|-|u_2\rangle\langle u_2|)$,
and vanishes only at the isolated instants where $\dot p=0$. Second, the
separation between geometry and transport is exact here: the Hamiltonian
rotates the support, and the reservoir dresses the weights.

The general solution of Eq.~\eqref{pop_eq} is
$p(t)=p_{\rm per}(t)+C\me^{-\Gamma t}$, and after the transient decay the
density matrix settles on the periodic steady state
\begin{equation}
p(t)=\frac12+A\cos(\omega t-\tilde\varphi),\;
A=\frac{\delta\Gamma}{\sqrt{\Gamma^2+\omega^2}},\;
\tan\tilde\varphi=\frac{\omega}{\Gamma}.
\label{p_per}
\end{equation}
For $\Gamma\gg\omega$ the system tracks the reservoir
($p\simeq p_{\rm eq}$), whereas for $\Gamma\ll\omega$ the rapid
modulation averages out in the periodic steady state and the weights collapse
to $p(t)\simeq1/2$. In either regime,
\begin{equation}
A\leq\delta<\frac12
\;\Longrightarrow\;
0<\frac12-A\leq p(t)\leq\frac12+A<1,
\end{equation}
so the nonzero eigenvalues $p(t)$ and $1-p(t)$ never touch $0$ or $1$:
the trajectory lives in $\mathcal D_2^3$ for all times, with rank exactly
$2$ rather than approaching rank $2$ only in a limiting sense. Note also that $p(t)$ crosses
$1/2$ twice per period, where the instantaneous spectrum is degenerate;
the construction remains smooth because the chosen smooth frame
$E=(|u_1\rangle,|u_2\rangle)$ remains globally defined by $U(\phi)$ and
$r(\phi)=\diag[p,1-p]$ stays positive, so the Sylvester equation
\eqref{AU_sylvester} never degenerates.

\subsubsection{Uhlmann connection and Wilson loop}

The frame connection of $E$ follows from
$U^\dagger\partial_\phi U=-\mi J_x$:
the diagonal matrix elements vanish,
$\langle u_a|\partial_\phi u_a\rangle=-\mi\langle a|J_x|a\rangle=0$
($a=1,2$), and
$\langle u_1|\partial_\phi u_2\rangle=-\mi\langle1|J_x|2\rangle=-\mi/\sqrt2$.
Since $\lambda_1+\lambda_2=1$, the dressing factor in
Eq.~\eqref{AU_explicit} reduces to $f(\phi)=2\sqrt{\lambda_1\lambda_2}$,
and
\begin{align}
\mathcal{A}_{\text{U}}
=-\frac{\mi}{\sqrt2}f(\phi)\,\sigma_x,
\label{AU_diss}
\end{align}
where $f(\phi)=\sqrt{1-4A^2\cos^2(\phi-\tilde\varphi)}$. 
All components of $\mathcal{A}_{\text{U}}$ are proportional to the same
generator $\sigma_x$, so
$[\mathcal{A}_{\text{U}}(\phi_1),\mathcal{A}_{\text{U}}(\phi_2)]=0$ and the
path ordering of the holonomy collapses. Hence
\begin{equation}
\mathcal{U}_{\text{U}}(C)
=\exp\left[\frac{\mi\sigma_x}{\sqrt2}
\int_0^{2\pi}f(\phi)\,\dif\phi\right],
\end{equation}
where the integral is a complete elliptic integral of the second kind. Introducing
\begin{equation}
E(m)=\int_0^{\frac{\pi}{2}}\sqrt{1-m\sin^2\theta}\dif\theta=\frac{1}{4}\int_0^{2\pi}f(\phi)\dif\phi,
\end{equation}
with $m=4A^2=\frac{4\delta^2\Gamma^2}{\Gamma^2+\omega^2}$, and defining
$\Theta\equiv 2\sqrt2\,E(m)$, we obtain
\begin{equation}
\mathcal{U}_{\text{U}}(C)=\me^{\mi\Theta\sigma_x}.
\end{equation}
The condition $\delta<1/2$ guarantees $m\leq4\delta^2<1$, so $E(m)$ is regular and $f$ never vanishes.
The natural gauge-invariant observable is the Wilson loop
\begin{equation}
W(C)=\tr\,\mathcal{U}_{\text{U}}(C)=2\cos\Theta
=2\cos\!\bigl[2\sqrt2\,E(m)\bigr],
\label{W_diss}
\end{equation}
which depends on the modulation depth $\delta$ and the dimensionless
reservoir-to-drive ratio $\Gamma/\omega$.

 The scalar Uhlmann phase carries less information here. Since
$r(0)=\diag[p_0,1-p_0]$ while
$\mathcal{U}_{\text{U}}(C)=\cos\Theta\,I_2+\mi\sin\Theta\,\sigma_x$,
the off-diagonal part of the holonomy drops out of the weighted trace,
\begin{equation}
\mathcal{Z}_{\text{U}}
=\tr\bigl[r(0)\mathcal{U}_{\text{U}}(C)\bigr]
=\cos\Theta ,
\label{Z_diss}
\end{equation}
independently of the initial weight $p_0$.
Moreover, the scalar phase is completely blind to the dissipative
dressing. Because $m\in[0,4\delta^2)$ with $\delta<1/2$ and $E(m)$ is
monotonically decreasing, $\Theta=2\sqrt2\,E(m)$ ranges over
$(2\sqrt2,\sqrt2\pi]\subset(\pi/2,3\pi/2)$, so $\cos\Theta<0$
throughout the reachable parameter range. Consequently
\begin{equation}
\mathcal{Z}_{\text{U}}=\cos\Theta<0,\qquad \theta_{\text{U}}=\pi
\end{equation}
identically: no node, no jump, and no undefined point occurs for any
$\Gamma/\omega$ or $\delta$. The continuous spectral information carried
by the holonomy is compressed entirely into the Wilson loop
$W(C)=2\cos\Theta$, the natural observable of this dissipative geometry.
This is the scalar-phase--Wilson-loop hierarchy of
Sec.~\ref{phase_vs_topology} in its extreme form: the scalar phase is
constant while the Wilson loop probes the full dissipative crossover.

Although the scalar phase is blind, the Wilson loop itself has one
distinctive feature. As $\Gamma/\omega$ increases, $\Theta$ decreases
through $\pi$ whenever $2\sqrt2\,E(4\delta^2)<\pi$, and $W(C)$ then
attains its minimum $-2$: the holonomy passes through
$\mathcal{U}_{\text{U}}(C)=-\mathbb{I}_2$, the nontrivial central element
of $SU(2)$, i.e., the parallel-transported purification returns to minus
itself, the M\"obius-like antiparallel return of
Sec.~\ref{phase_vs_topology}, reached here by tuning the reservoir rather
than a Hamiltonian parameter. For $\delta=0.49$ this occurs at
$(\Gamma/\omega)_c\simeq3.6$ (dot in Fig.~\ref{fig:Ec}); for weaker
modulation the crossing does not occur and $W(C)$ decreases monotonically.
The dot is not a phase transition: $W(C)$ is smooth through it, and the
scalar phase remains $\pi$ on both sides (indeed $\theta_{\text{U}}=\pi$
identically since $W(C)<0$ throughout; see Fig.~\ref{fig:Ec}(b)).

\begin{figure*}[ht]
\centering
\includegraphics[width=6.4in]{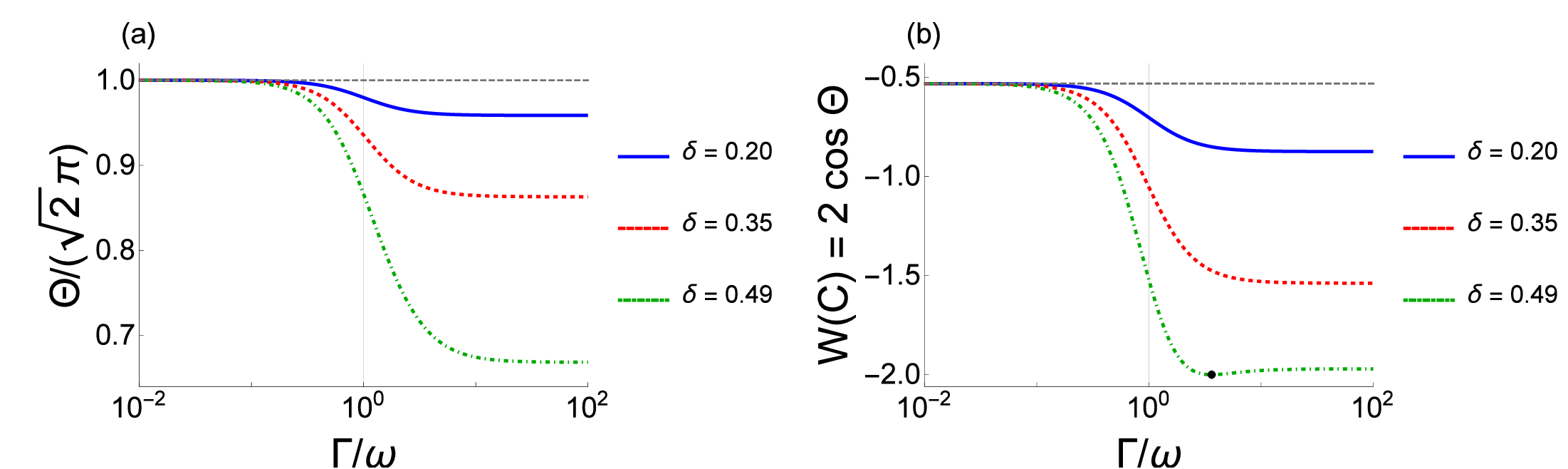}
\caption{(a) The holonomy angle $\Theta$ in units of its equal-weight value
$\sqrt2\pi$, and (b) the Wilson loop $W(C)=2\cos\Theta$, as functions of
$\Gamma/\omega$ for $\delta=0.2$, $0.35$, and $0.49$. Dashed lines mark
the Wilczek--Zee limits at $\Gamma/\omega\to0^+$. The dot in (b) marks the
minimum $W(C)=-2$ at $(\Gamma/\omega)_c\simeq3.6$ for $\delta=0.49$, where
the holonomy passes through $-\mathbb{I}_2$.}
\label{fig:Ec}
\end{figure*}

\subsubsection{Dissipative control and limiting cases}

The dissipative population current can be defined as
\begin{equation}
J_{\rm diss}(t)\equiv\Gamma[p(t)-p_{\rm eq}(t)]=-\dot p.
\end{equation}
so that for the prescribed support frame, a single measured trajectory $p(t)$
determines the instantaneous dressing factor and hence the holonomy. The causal chain
reservoir modulation $\to p(t)\to f(\phi)\to\mathcal{A}_{\text{U}}\to
\mathcal{U}_{\text{U}}(C)$ turns the spectral dressing of
Sec.~\ref{reductions} into a dynamical, tunable quantity rather than a
fixed property of the state. Three limits organize the behavior:

(i) \emph{Slow reservoir}, $\Gamma/\omega\to0^+$: in the periodic steady
state the weights collapse to $1/2$, $m\to0$, $E(0)=\pi/2$, and
\begin{equation}
\Theta\to\sqrt2\,\pi,\qquad
W(C)\to2\cos(\sqrt2\,\pi),
\end{equation}
the equal-weight Wilczek--Zee transport of Sec.~\ref{reductions}.

(ii) \emph{Fast reservoir}, $\Gamma/\omega\to\infty$: the system tracks
$p_{\rm eq}(t)$, $m\to4\delta^2$, and
\begin{equation}
\Theta\to2\sqrt2\,E(4\delta^2).
\end{equation}

(iii) \emph{Weak modulation}, $\delta\ll1$: expanding
$E(m)=\frac{\pi}{2}\bigl(1-\frac{m}{4}-\frac{3m^2}{64}-\cdots\bigr)$,
\begin{equation}
\Theta=\sqrt2\,\pi\left[
1-\frac{\delta^2\Gamma^2}{\Gamma^2+\omega^2}+O(\delta^4)
\right],
\end{equation}
so the deviation from the maximally mixed holonomy is of order $\delta^2$
and is maximized in the tracking regime $\Gamma\gg\omega$.
Figure~\ref{fig:Ec} shows $\Theta$ and $W(C)$ over the full range of
$\Gamma/\omega$. Read from left to right, the figure displays a crossover
from Wilczek--Zee transport at $\Gamma/\omega\to0^+$, where
$\mathcal{A}_{\text{U}}\to E^\dagger\dif E$, to dissipatively dressed
Uhlmann transport, in which the reservoir imprints the spectral modulation
on the holonomy.

Topologically the example is minimal: the orbit is a circle, and since
$H^2(S^1)=0$ there is no Chern class to carry. The nontrivial holonomy
is therefore a geometric monodromy of the closed support loop, rather
than a characteristic-class invariant. Its role in the hierarchy of
Table~\ref{tab:examples} is complementary to the Yang monopole: the support
loop is held fixed while the reservoir continuously tunes the spectral
dressing.

\subsection{Yang monopole as rank-2 Uhlmann holonomy}
\label{yang}

We now construct a physical realization with nontrivial second Chern
topology. Consider a four-level system with Hamiltonian
\begin{equation}
H(\mathbf{n})=\sum_{a=1}^5 n_a\Gamma_a,\quad
\mathbf{n}\in S^4,\; \sum_{a=1}^5 n_a^2=1,
\end{equation}
where we adopt the standard representation
$\Gamma_i=\sigma_1\otimes\sigma_i$ ($i=1,2,3$),
$\Gamma_4=\sigma_2\otimes I_2$, and $\Gamma_5=\sigma_3\otimes I_2$,
which satisfies $\{\Gamma_a,\Gamma_b\}=2\delta_{ab}$. Since $H^2=I$, the
spectrum is two-fold degenerate at $\pm1$. An explicit orthonormal frame for
the positive-energy subspace, regular on the patch $n_5<1$, is
\begin{align}
|u_1\rangle&=\frac{1}{\sqrt{2(1-n_5)}}
\begin{pmatrix}-n_3+\mi n_4\\ -n_1-\mi n_2\\ n_5-1\\ 0\end{pmatrix},
\notag\\
|u_2\rangle&=\frac{1}{\sqrt{2(1-n_5)}}
\begin{pmatrix}-n_1+\mi n_2\\ n_3+\mi n_4\\ 0\\ n_5-1\end{pmatrix},
\label{yang_frame}
\end{align}
satisfying $H|u_\alpha\rangle=|u_\alpha\rangle$ (as follows from
$\{\Gamma_a,\Gamma_b\}=2\delta_{ab}$) and
$\langle u_\alpha|u_\beta\rangle=\delta_{\alpha\beta}$; the corresponding
projector is $EE^\dagger=\frac12\left(I+\sum_a n_a\Gamma_a\right)$ with
$E=(|u_1\rangle,|u_2\rangle)$. We define the fixed-rank state
\begin{equation}
\rho(\mathbf{n})=\tfrac12 P_+(\mathbf{n}),\quad
P_+(\mathbf{n})=\tfrac12\left(I+\sum_{a=1}^5 n_a\Gamma_a\right),
\end{equation}
with eigenvalues $\{1/2,1/2,0,0\}$, so $\rho$ is a genuine rank-2 mixed
state, and the eigenvalue matrix in this frame is
$r=\diag(1/2,1/2)=\frac12 I_2$.

Because the nonzero eigenvalues are equal, the coefficient in
Eq.~\eqref{AU_explicit} equals one and the Uhlmann connection reduces to
the
Wilczek--Zee form $\mathcal{A}_{\text{U}}=E^\dagger\dif E$. Differentiating
Eq.~\eqref{yang_frame} and using $n_a\dif n_a=0$ gives the explicit
Yang-monopole potential
\begin{equation}
\mathcal{A}_{\text{U}}=\frac{\mi}{2}\sum_{a=1}^3 \mathcal{A}_{\text{U}}^a\,\sigma_a,
\label{yang_A}
\end{equation}
with
\begin{align}
\mathcal{A}_{\text{U}}^1&=\frac{n_4\dif n_1-n_3\dif n_2-n_1\dif n_4+n_2\dif n_3}{1-n_5},\notag\\
\mathcal{A}_{\text{U}}^2&=\frac{n_3\dif n_1+n_4\dif n_2-n_1\dif n_3-n_2\dif n_4}{1-n_5},\notag\\
\mathcal{A}_{\text{U}}^3&=\frac{n_1\dif n_2-n_2\dif n_1+n_4\dif n_3-n_3\dif n_4}{1-n_5}.
\label{yang_Aa}
\end{align}
Equivalently $\mathcal{A}_{\text{U}}^a=\bar\eta^a_{\mu\nu}n_\mu\dif n_\nu/(1-n_5)$, where
$\bar\eta^a_{ij}=\epsilon_{aij}$ ($i,j=1,2,3$),
$\bar\eta^a_{\mu4}=-\delta_{a\mu}$, $\bar\eta^a_{4\nu}=\delta_{a\nu}$ is
the 't Hooft symbol. In particular $\tr\mathcal{A}_{\text{U}}=0$: the
connection is traceless in this gauge, consistently with $c_1=0$. (The frame
\eqref{yang_frame} is singular at $n_5=1$, the usual Dirac string; a frame
regular near $n_5=1$ is obtained by $n_5\to-n_5$, and the two patches are
related by a gauge transformation.)

To compute the second Chern number we use the projector curvature identity
$\tr(\mathcal{F}_{\text{U}}\wedge\mathcal{F}_{\text{U}})
=\tr[P_+(\dif P_+)^4]$ (see Appendix~\ref{app:projector}) and the Clifford
algebra; since $\tr\mathcal{F}_{\text{U}}=0$ by Eq.~\eqref{yang_A}, the
general definition of Sec.~\ref{topology} reduces to this trace form. One
finds
\begin{equation}
\tr[P_+(\dif P_+)^4]
=\tfrac18\epsilon_{abcde}n_a\dif n_b\wedge\dif n_c\wedge\dif n_d\wedge\dif n_e
=3\,\omega_{S^4},
\end{equation}
with $\omega_{S^4}$ the volume form on the unit $4$-sphere. Since
$\mathlarger{\int}_{S^4}\omega_{S^4}=8\pi^2/3$,
\begin{equation}
C_2=\frac{1}{8\pi^2}\int_{S^4}
\tr(\mathcal{F}_{\text{U}}\wedge\mathcal{F}_{\text{U}})=1.
\end{equation}
The sign flips for $P_-$ or upon reversing the orientation of $S^4$.
Because $H^2(S^4,\mathbb{Z})=0$, the first Chern class vanishes and the
rank-2 bundle cannot split into a sum of line bundles. The topology is
\emph{purely non-Abelian}, detected solely by $C_2$. This is the natural
mixed-state analogue of the four-dimensional quantum Hall response, where
$C_2$ controls the quantized nonlinear transport coefficient; the Uhlmann
Wilson loop of the degenerate occupied subspace provides a direct route to
access this invariant.

We conclude by examining what the Uhlmann phase can see in this, the
topologically richest example of the paper. The analysis proceeds in three
steps.

\emph{Step 1: reduction to $SU(2)$.} The explicit gauge
\eqref{yang_A} is traceless, $\tr\mathcal{A}_{\text{U}}=0$, so
$\mathcal{U}_{\text{U}}(C)\in SU(2)$ for every loop. This is consistent with
the vanishing of $c_1$, although $c_1=0$ alone would not in general imply
such a reduction; here the reduction is explicit.

\emph{Step 2: reality of the amplitude.} The equal weights give
$r(0)=s^\dagger(\rho(0))s(\rho(0))=r=\frac12 I_2$, so the Uhlmann phase
reduces to
\begin{equation}
\theta_{\text{U}}=\arg\tr\mathcal{U}_{\text{U}}(C).
\end{equation}
Since $\mathcal{U}_{\text{U}}(C)\in SU(2)$, its eigenvalues form a conjugate
pair $\me^{\pm\mi\Theta(C)}$ with $\Theta(C)\in[0,\pi]$, the trace being
gauge invariant for a closed loop. Hence
\begin{align}
&\mathcal{Z}_{\text{U}}
=\tr[r(0)\mathcal{U}_{\text{U}}(C)]
=\cos\Theta(C)\in\mathbb{R},\notag\\
&\theta_{\text{U}}=\arg \mathcal{Z}_{\text{U}}=
\begin{cases}
0, & \Theta(C)<\pi/2,\\
\pi, & \Theta(C)>\pi/2.
\end{cases}
\end{align}
The angle $\Theta(C)$ is determined by the holonomy and varies continuously
with the loop, so $\theta_{\text{U}}$ jumps when $\Theta(C)$ crosses
$\pi/2$.

\emph{Step 3: what is and is not topological.} The restriction
$\theta_{\text{U}}\in\{0,\pi\}$ is kinematic rather than topological: it
records only the sign of the real number $\cos\Theta(C)$, a consequence of
the $SU(2)$ trace, and no integer invariant enters it. In particular, $c_1$
vanishes identically because $H^2(S^4,\mathbb{Z})=0$, so there is no Chern
parity for the phase to shadow. This contrasts sharply with the
$\mathbb{C}P^1$ example of Sec.~\ref{example_phase}, where, for the
equatorial loop, the value $\pi$ was pinned by $c_1(L_1)=-1$ and only the
choice between the two quantized values was left to the spectral weights. The
genuine invariant $C_2=\pm1$ is carried by the non-Abelian part of
$\mathcal{F}_{\text{U}}$ and never reaches the scalar trace; it is
characterized directly by the second Chern number of the underlying rank-two
bundle, with the Wilson loop providing the natural gauge-invariant probe,
although a detailed Wilson-loop characterization of $C_2$ lies beyond the
scope of the present work. Finally, this example is the minimal realization
of the splitting principle of Sec.~\ref{factorization}: a rank-1 family on
$S^4$ would carry no topology at all, since $c_1\in H^2(S^4)=0$ and higher
Chern classes require rank at least two, so the doubly degenerate equal-weight
state is the smallest rank supporting a non-factorizable second Chern number.

\begin{table*}[ht]
\centering
\caption{The four examples at a glance. ``Topology'' lists the characteristic
classes of the Uhlmann bundle over the parameter manifold; ``observable''
records the natural gauge-invariant probe, the scalar phase or the Wilson
loop, and its behavior.}
\label{tab:examples}
\footnotesize
\setlength{\tabcolsep}{3pt}
\begin{tabular}{l l l l l}
\hline
Example & Base & Spectrum & Topology & Uhlmann observable \\
\hline
$\mathbb{C}P^1$ (Sec.~\ref{example_phase}) & $S^2$ & $\lambda_1\neq\lambda_2$ & $c_1(L_1)=-1$, $c_1(L_2)=0$ & $\mathbb{Z}_2$ shadow (symmetric loop), weights select \\
Non-Abelian $N=3$ (Sec.~\ref{nonabelian_example}) & $(\phi,\chi,\theta)$ & $\lambda_1\neq\lambda_2$ & trivial, $c_1(L_a)=0$ & complex, continuous in weights and loop \\[6pt]
Dissipative orbit (Sec.~\ref{lindblad_example}) & $S^1$ orbit & $\lambda_{1,2}(t)$, reservoir-driven & none, $H^2(S^1)=0$ & \shortstack[l]{$\theta_{\text{U}}\equiv\pi$ (blind); $W(C)=2\cos[2\sqrt2 E(m)]$ \\set by $\Gamma/\omega$} \\
Yang monopole (Sec.~\ref{yang}) & $S^4$ & equal weights & $c_1=0$, $C_2=\pm1$ & $\{0,\pi\}$, kinematic, blind to $C_2$ \\
\hline
\end{tabular}
\end{table*}

\subsection{Summary and comparison of the examples}
\label{examples_summary}

The geometric content of the theory is not exhausted by the scalar Uhlmann
phase. The fundamental objects are the holonomy $\mathcal{U}_{\text{U}}(C)$
and the characteristic classes of the fixed-rank bundle, while
$\theta_{\text{U}}=\arg\tr[r(0)\mathcal{U}_{\text{U}}(C)]$ is a single
gauge-invariant projection of that data. The four examples of
Table~\ref{tab:examples}, together with the geometric-topological hierarchy
of Sec.~\ref{topology}, reveal a layered structure that can be summarized as
follows.

\subsubsection{Holonomy versus scalar phase}

As shown in Sec.~\ref{phase_vs_topology}, the full holonomy is homotopy
invariant if and only if the curvature vanishes, and on the simply connected
base $\mathcal{D}_k^N\simeq\operatorname{Gr}(k,N)$ a flat connection would
render all holonomies trivial. A nonvanishing $\theta_{\text{U}}$ therefore
necessarily requires nonvanishing curvature: the scalar phase is generically
geometric rather than topological, and its jumps mark geometric phase
transitions rather than changes of topological sector. Being a single number
extracted from the holonomy matrix, $\theta_{\text{U}}$ may in addition remain
insensitive to parts of that matrix's structure, a point to which we return
below.

\subsubsection{Where the topology resides}

The integer topology resides in the characteristic classes of the bundle and
in non-Abelian Wilson loops, not in $\theta_{\text{U}}$ itself. The bridge
between the two is
\begin{equation}
\tr\mathcal{F}_{\text{U}}=\tr F_K,
\quad\text{i.e.,}\quad
c_1(\text{Uhlmann bundle})=c_1(\mathcal{S}_k).
\end{equation}
How much of this topology the scalar phase captures is decided by the
structure of the connection, and three cases arise, as analyzed in
Sec.~\ref{phase_vs_topology}.

\emph{Case 1: $k=1$ (pure states).} The scalar phase reduces to the Berry
phase,
\begin{equation}
\theta_{\text{U}}\;\longrightarrow\;\gamma_{\text{B}}
=\mathlarger{\int}_\Sigma F,
\end{equation}
which for the rotationally symmetric representatives and equatorial loops
considered here equals
\begin{equation}
\gamma_{\text{B}}=\pi c_1\pmod{2\pi}.
\end{equation}
In such symmetry-adapted constructions $\theta_{\text{U}}$ is therefore
genuinely tied to the first Chern number. For a generic loop or curvature
distribution the Berry phase is a flux through a spanning surface and is no
longer fixed by $c_1$ alone.

\emph{Case 2: $k>1$ with an Abelian connection within $U(k)$.} As in the
rank-2 example of Sec.~\ref{example_phase}, the same conclusion survives in
$\mathbb{Z}_2$ form,
\begin{equation}
\theta_{\text{U}}\in\{0,\pi\},
\end{equation}
reflecting, within that construction, the parity of the twisted eigenline's
Berry holonomy: the scalar phase then carries a $\mathbb{Z}_2$ shadow of the
first Chern class.

\emph{Case 3: genuinely non-Abelian families ($k\ge2$).} Here
$\theta_{\text{U}}$ is a highly compressed functional of the full holonomy:
(i) it may vary with the non-Abelian curvature, as the Yang monopole
illustrates through its dependence on the $SU(2)$ angle $\Theta(C)$;
(ii) it retains too little information to reconstruct the holonomy matrix or
the higher characteristic classes, being insensitive to any part of the
holonomy that does not affect the weighted trace
$\tr[r(0)\mathcal{U}_{\text{U}}(C)]$; and (iii) the second Chern number is
invisible to it: the scalar phase does not determine $C_2$
(scalar $\theta_{\text{U}}\not\Rightarrow C_2$), whereas the Wilson loop
does. The obstruction is structural: by the splitting argument of
Sec.~\ref{factorization}, a non-factorizable $C_2$ requires spectral
degeneracy, which is precisely the regime in which the scalar phase loses the
ability to encode the full holonomy.

\subsubsection{The role of the spectrum}

A splitting argument makes the role of the spectrum explicit. For families
with non-degenerate spectrum, Proposition~\ref{prop:splitting} gives
\begin{equation}
c(E)=\prod_{a=1}^k\bigl(1+c_1(L_a)\bigr),
\end{equation}
so that every characteristic class of the fixed-rank bundle is a polynomial
in the Berry Chern data of the eigenlines, and no information beyond these
Abelian data remains. Genuinely non-factorizable non-Abelian topology
therefore requires spectral degeneracy, which fuses the eigenlines into a
single non-Abelian object: at equal weights the Uhlmann connection reduces,
as shown in Sec.~\ref{reductions}, to the Wilczek--Zee form
$E^\dagger\dif E$ on that fused subspace, precisely the doorway through
which higher Chern classes enter. The rank-2 Yang monopole on $S^4$ realizes
this minimally, with a quantized second Chern number $C_2=\pm1$ that no
splitting into line bundles can reproduce.

\subsubsection{Lessons from the four examples}

These structural observations translate into three physical lessons,
summarized in Table~\ref{tab:examples}. First, whenever the underlying
eigenlines carry band topology, the Uhlmann phase retains, for the
symmetry-adapted loops, a quantized value inherited from that topology and
loses it at a critical point of the control parameter, providing a
parameter-driven signature of the underlying topology; the $\mathbb{C}P^1$
example occupies this layer, with the spectral weights alone driving the
transition. Second, once the spectral weights become unequal, the phase
turns into a continuous, spectrum-dependent observable that probes the
coherence structure of the mixed state through the Uhlmann damping of the
interlevel matrix elements; the non-Abelian $N=3$ model occupies this layer,
its weight-dependent holonomy being genuinely distinct from the
Wilczek--Zee transport of the degenerate limit, while the dissipative qutrit
orbit makes this layer genuinely dynamical: a modulated reservoir steers the
spectral weights in time without ever leaving the rank-2 stratum, and the
Wilson loop $W(C)=2\cos[2\sqrt2 E(m)]$ records the reservoir's tracking
ratio $\Gamma/\omega$ at fixed modulation amplitude $\delta$ in closed
form. In doing so it exposes the scalar phase as maximally uninformative,
$\theta_{\text{U}}=\pi$ identically, and thereby delivers the clearest
illustration of the hierarchy
$\theta_{\text{U}}\subset\text{Wilson loop}\subset\text{characteristic
classes}$: a constant scalar phase coexisting with a continuously tuned,
genuinely nonequilibrium holonomy. Third,
where the topology is purely non-Abelian, as in the Yang monopole, the
scalar Uhlmann phase considered here does not encode $C_2$, and Wilson loops
become the relevant observable.

The division of labor among the four examples is sharp: the
$\mathbb{C}P^1$ model owns the quantized $0$-to-$\pi$ transition, the
non-Abelian $N=3$ model owns genuinely non-Abelian holonomy, the Yang
monopole owns integer topology, and the dissipative orbit owns
nonequilibrium, reservoir-controlled transport. In all cases the physical
picture is the same. The supporting subspace
determines the characteristic classes of the fixed-rank bundle; a
non-degenerate spectral resolution may carry additional eigenline topology
within it; the spectrum decides how much of this structure survives in a
given observable; and the Uhlmann phase records the transport of
purifications along the way. The hierarchy of observables is thus
$\theta_{\text{U}}\subset\text{Wilson loop}\subset\text{characteristic
classes}$, with the scalar phase capturing at most an Abelian or
$\mathbb{Z}_2$ shadow of the topology encoded in the full bundle.


\section{Conclusion}
\label{conclusion}

We have constructed Uhlmann's theory on the manifold of density matrices of
fixed rank below the Hilbert-space dimension, where the full-rank formalism
does not apply. Minimal purifications place a principal $U(k)$-bundle over
the rank-$k$ stratum and reduce the structure group from $U(N)$ to $U(k)$;
the Uhlmann connection is fixed uniquely by a Sylvester equation, and its
eigenframe expression interpolates between the Berry, Wilczek--Zee, and
faithful Uhlmann connections, while the Bures metric follows from the same
horizontal lifts.

The contrast with the full-rank theory is instructive. In the full-rank
theory the base manifold is contractible and the bundle is trivial:
holonomies are genuine geometric objects, but every loop is contractible, no
nontrivial characteristic class exists, and no integer invariant constrains
the holonomy, so topology offers no classification. In the fixed-rank theory
the base deformation-retracts onto the Grassmannian and the first Chern
class equals that of the tautological bundle; for families with
non-degenerate spectrum, a splitting argument shows that all characteristic
classes factorize into eigenline Berry data, so that non-factorizable higher
Chern topology requires failure of the global eigenline splitting, which in
the present spectral setting requires degeneracy. A rank-2 Yang monopole
realizes this minimally with a quantized second Chern number, linking the
mixed-state geometry to the four-dimensional quantum Hall response. The
scalar Uhlmann phase is generically geometric: in the symmetry-adapted
constructions considered here it carries at most a $\mathbb{Z}_2$ shadow of
the first Chern class, its jumps mark geometric phase transitions rather than
changes of topological sector, and higher invariants require non-Abelian
probes such as Wilson loops.

The solvable examples illustrate what this richer structure may offer
physically: a spectral-weight-driven $0$-to-$\pi$ transition of an Abelian
rank-2 phase; a genuinely non-Abelian, weight-dependent holonomy on a
topologically trivial bundle; a dissipatively driven qutrit orbit in which a
modulated reservoir steers the spectral weights and dresses the Uhlmann
holonomy through a closed elliptic-integral factor; and a Yang monopole
whose $C_2$ lies beyond the reach of any scalar phase. Because purifications
require only a $k$-dimensional ancilla,
these phases and holonomies are accessible to interferometric measurement on
a smaller platform than the full-rank protocol demands. Developing a
Wilson-loop characterization of the second Chern number, and extending the
dissipative construction to reservoirs that act on the kernel and can close
the spectral gap, are natural next steps.

\begin{acknowledgments}
H.G. was supported by the Quantum Science and Technology-National Science and
Technology Major Project (Grant No.~2021ZD0301904) and the National Natural
Science Foundation of China (Grant No.~12447216). X.-Y. H was supported by the National Natural Science Foundation of China (Grant No. 12405008).
\end{acknowledgments}

\appendix
\section{Details of the Ehresmann connection}
\label{app:connection}

In this appendix we collect the technical verifications omitted in
Sec.~\ref{connection}.

\subsection{Invertibility of the Sylvester operator}

Let $h=W^\dagger W>0$ and consider the linear map
$L:\omega\mapsto h\omega+\omega h$ on $k\times k$ matrices. In an eigenbasis
of $h$ with eigenvalues $\lambda_a>0$, one has
$L(\omega)_{ab}=(\lambda_a+\lambda_b)\omega_{ab}$. Since
$\lambda_a+\lambda_b>0$ for all $a,b$, $L$ is invertible. Moreover, $L$
commutes with Hermitian conjugation, $L(\omega^\dagger)=L(\omega)^\dagger$,
and therefore preserves the subspace of anti-Hermitian matrices. The
right-hand side of Eq.~\eqref{eq:sylvester} is anti-Hermitian, so its unique
preimage $\omega(X)$ is anti-Hermitian, as required. This property would fail
for a singular amplitude, where zero eigenvalues of $h$ would make $L$
non-invertible; this is why minimal purifications are essential.

\subsection{Equivariance under right translation}

Let $W'=WV$ with constant $V\in U(k)$. A tangent vector transforms as
$X\mapsto XV$. From Eq.~\eqref{eq:decomp} applied at $W$,
\begin{equation}
XV = X_H V + WV\,(V^\dagger\omega(X)V).
\end{equation}
The term $X_HV$ is horizontal because
\begin{align}
&(X_HV)^\dagger(WV)=V^\dagger X_H^\dagger W V\notag\\
=&V^\dagger W^\dagger X_H V = (WV)^\dagger(X_HV).
\end{align}
Since $V^\dagger\omega(X)V\in\mathfrak{u}(k)$ and the decomposition at $WV$
is
unique, we obtain
\begin{equation}
\omega_{WV}(XV)=V^\dagger\omega_W(X)V.
\end{equation}
This proves $R_V^*\omega=\mathrm{Ad}_{V^{-1}}\omega$.

\subsection{Gauge transformation law}

For a local gauge transformation $W'=WV$ with varying $V$, one has
$h'=W'^\dagger W'=V^\dagger h V$ and
$\dif W'=(\dif W)V+W\dif V$. The left-hand side of Eq.~\eqref{eq:sylvester}
transforms as
\begin{align}
&W'^\dagger\dif W'-\dif W'^\dagger W'
=V^\dagger\bigl(W^\dagger\dif W-\dif W^\dagger W\bigr)V\notag\\
  &\qquad\qquad\qquad\qquad\quad+V^\dagger h\dif V-\dif V^\dagger h V\notag\\
=&h'\bigl(V^\dagger\omega V\bigr)+\bigl(V^\dagger\omega V\bigr)h'+h'\bigl(V^\dagger\dif V\bigr)-\bigl(V^\dagger\dif V\bigr)^\dagger h'
  \notag\\
=&h'\omega'+\omega'h',
\end{align}
where in the last step we have set
\begin{equation}
\omega'=V^\dagger\omega V+V^\dagger\dif V
\end{equation}
and used $(V^\dagger\dif V)^\dagger=-V^\dagger\dif V$. Uniqueness of the
solution of Eq.~\eqref{eq:sylvester} implies that $\omega'$ is the connection
form at $W'$.

\subsection{Intertwining structure}

The reproducing property $\omega(u^\#)=u$ and the equivariance
$R_V^*\omega=\mathrm{Ad}_{V^{-1}}\omega$ combine into a single intertwining
relation. From the pushforward identity
\begin{equation}
R_{V*}u^\#_W=(V^\dagger uV)^\#_{WV},
\label{pushforward}
\end{equation}
which follows from $W\me^{tu}V=WV\me^{tV^\dagger uV}$, one obtains
\begin{equation}
\omega_{WV}\bigl((V^\dagger uV)^\#_{WV}\bigr)
= V^\dagger uV
= V^\dagger\omega_W(u^\#_W)V.
\label{intertwining}
\end{equation}
Thus $\omega$ intertwines the natural action on $T\mathcal{P}_k$,
$X\mapsto XV$, with the adjoint action on $\mathfrak{u}(k)$,
$u\mapsto V^\dagger uV$; equivalently, the following diagram commutes:
\begin{center}
\includegraphics[width=3.41in,clip]{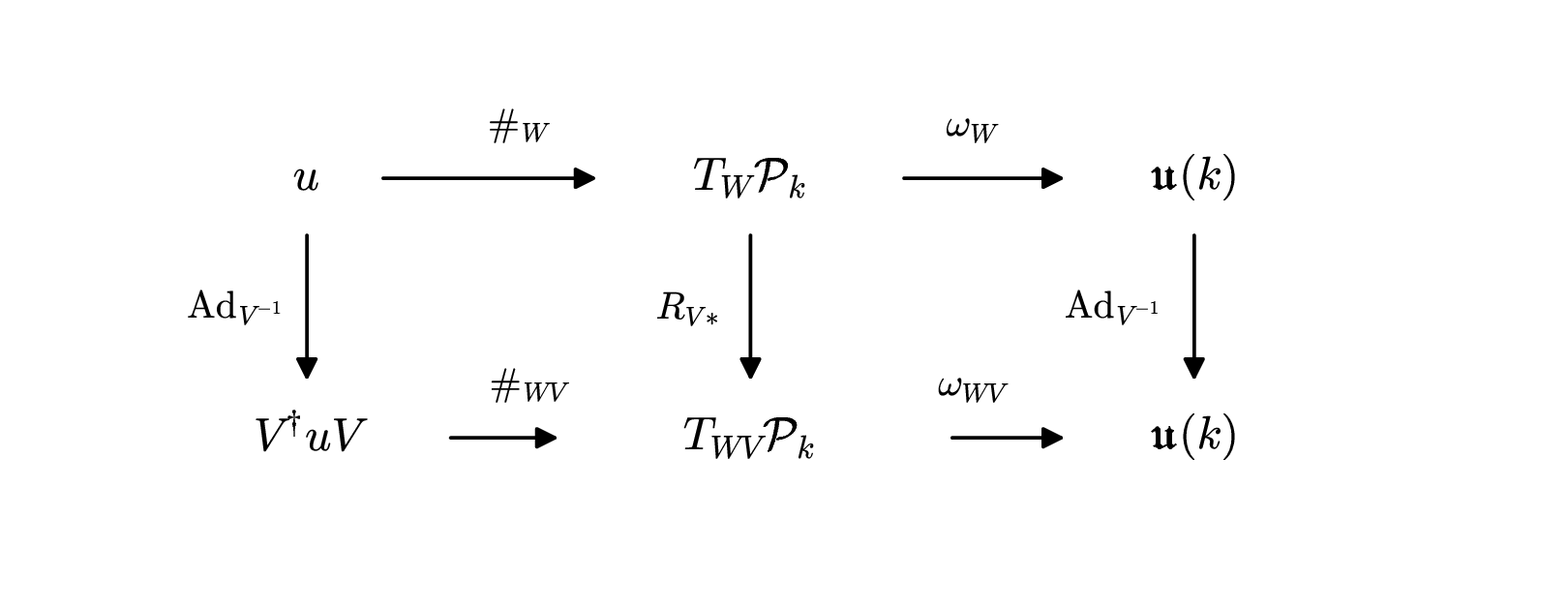}
\captionof{figure}{The Ehresmann connection $\omega$ intertwines the natural
$U(k)$ action on $T\mathcal{P}_k$ with the adjoint action on
$\mathfrak{u}(k)$, cf.\ Eq.~\eqref{intertwining}.}
\label{Fig1}
\end{center}
Consequently, $\ker\omega_{WV}=R_{V*}\ker\omega_W$, so the horizontal
distribution $H_W\mathcal{P}_k=\ker\omega_W$ is smooth and $U(k)$-equivariant.
On a single fiber, where $\omega$ reduces to the Maurer--Cartan form, this is
the usual alignment of right translations with the adjoint representation.

\subsection{Horizontal transport equivalence}

Let $\tilde\gamma(t)=W(t)$ be a curve with tangent $\tilde X=\dot W$. From
Eq.~\eqref{eq:decomp},
\begin{equation}
\dot W=\dot W_H+W\omega(\dot W),
\end{equation}
where $\dot W_H$ is horizontal. A direct calculation gives
\begin{equation}
W^\dagger\dot W-\dot W^\dagger W
= h\,\omega(\dot W)+\omega(\dot W)h.
\end{equation}
Since $L$ is bijective, the left-hand side vanishes if and only if
$\omega(\dot W)=0$. Therefore
\begin{equation}
W^\dagger\dot W=\dot W^\dagger W
\Longleftrightarrow
\omega(\dot W)=0,
\end{equation}
which is Eq.~\eqref{eq:horizontal_equivalence}.

\section{Derivation of the Bures metric}
\label{app:bures}

We first note that the infimum in the Bures distance may be restricted to
minimal purifications. Indeed, any purification $W'$ of $\rho$ has the form
$W'=WC$ with $W$ minimal and $CC^\dagger=I_k$ (a co-isometry), and for two
purifications $W_i'=W_iC_i$ a short calculation gives
\begin{align}
&\tr\bigl[(W_1C_1-W_2C_2)(W_1C_1-W_2C_2)^\dagger\bigr]\notag\\
=&2-2\operatorname{Re}\tr\bigl[W_2^\dagger W_1\,C_1C_2^\dagger\bigr]
\ge 2-2F,
\end{align}
because $C_1C_2^\dagger$ is a contraction and
\begin{align}
&\max_{U\in U(k)}\operatorname{Re}\tr[W_2^\dagger W_1U]
=\tr\sqrt{W_1^\dagger W_2W_2^\dagger W_1}\notag\\
=&\tr\sqrt{\sqrt{\rho_1}\rho_2\sqrt{\rho_1}}=F,
\end{align}
and equality is attained within the minimal class by parallel purifications.

Let $\rho(t)$ be a curve in $\mathcal{D}_k^N$ and $W(t)$ a horizontal lift.
The Hilbert--Schmidt speed is
$\|\dot W\|_{\mathrm{HS}}^2=\tr(\dot W^\dagger\dot W)$. We evaluate this in
the eigenbasis of $\rho$, allowing for vanishing eigenvalues. Let
$|i\rangle$ denote the eigenstates of $\rho$ with eigenvalues $\lambda_i$,
where $i=1,\dots,N$ and $\lambda_i=0$ for $i>k$. A minimal purification is
an
$N\times k$ matrix whose columns are labeled by $j=1,\dots,k$. From
$\dot\rho=\dot WW^\dagger+W\dot W^\dagger$ and the explicit form of $W$ in
the eigenframe, one finds
\begin{align}
&\langle i|\dot\rho|j\rangle
=
\sqrt{\lambda_j}\,\langle i|\dot W|j\rangle
+\sqrt{\lambda_i}\,\bigl(\langle j|\dot W|i\rangle\bigr)^*
\quad(i\le k),\notag\\
&\langle i|\dot\rho|j\rangle
=
\sqrt{\lambda_j}\,\langle i|\dot W|j\rangle
\quad(i>k),
\end{align}
for $j\le k$. The horizontal condition $W^\dagger\dot W=\dot W^\dagger W$,
restricted to the support block, reads
\begin{equation}
\sqrt{\lambda_i}\,\langle i|\dot W|j\rangle
=
\sqrt{\lambda_j}\,\bigl(\langle j|\dot W|i\rangle\bigr)^*
\qquad(i,j\le k),
\end{equation}
so that
\begin{equation}
\langle i|\dot\rho|j\rangle
=
\frac{\lambda_i+\lambda_j}{\sqrt{\lambda_j}}\,\langle i|\dot W|j\rangle
\qquad(i,j\le k).
\end{equation}
Solving for $\langle i|\dot W|j\rangle$ and summing the modulus squares over
all $i=1,\dots,N$ and $j=1,\dots,k$ gives
\begin{equation}
\|\dot W\|_{\mathrm{HS}}^2
=
\sum_{i=1}^k\sum_{j=1}^k
\frac{\lambda_j\,|\langle i|\dot\rho|j\rangle|^2}{(\lambda_i+\lambda_j)^2}
+
\sum_{i=k+1}^N\sum_{j=1}^k
\frac{|\langle i|\dot\rho|j\rangle|^2}{\lambda_j}.
\end{equation}
In the first term, the summand is symmetric under $i\leftrightarrow j$, so it
equals one half of the same sum with $\lambda_j$ replaced by
$\lambda_i+\lambda_j$; the second term is
$|\langle i|\dot\rho|j\rangle|^2/(\lambda_i+\lambda_j)$ since $\lambda_i=0$
there. Hence
\begin{equation}
\|\dot W\|_{\mathrm{HS}}^2
=
\frac12\sum_{\lambda_i+\lambda_j>0}
\frac{|\langle i|\dot\rho|j\rangle|^2}{\lambda_i+\lambda_j},
\end{equation}
because $\lambda_i+\lambda_j>0$ whenever $j\le k$, even if $\lambda_i=0$.
This is precisely the Bures line element \eqref{bures_metric}, confirming
the
consistency.

\section{Local sections, pullback, and the Sylvester equation}
\label{app:local}

\subsection{Local sections and their non-global nature}

The section $s(\rho)=E\sqrt{r}$ is defined only locally because the
eigenframe
$E$ is determined by $\rho$ only up to the allowed rotations and no smooth
global choice exists when the base $\mathcal{D}_k^N$ deformation-retracts
onto the Grassmannian $\operatorname{Gr}(k,N)$, whose frame bundle is
nontrivial. Hence $s$ is a local section patched by transition functions of
the frame bundle, while the holonomy and the Uhlmann phase remain globally
defined and gauge covariant. In the full-rank case $k=N$,
$\sigma(\rho)=\sqrt{\rho}$ is global and the bundle is trivial.

\subsection{Pullback derivation of the Uhlmann connection}

Within the domain of $s$, write $W=s(\rho)\mathcal{U}$ with
$\mathcal{U}\in U(k)$. The relation
between the Ehresmann connection and the pulled-back connection,
\begin{equation}
\omega=\mathcal{U}^\dagger \pi^*\mathcal{A}_{\text{U}}\mathcal{U}
+\mathcal{U}^\dagger \dif_P \mathcal{U},
\label{app:omegaV}
\end{equation}
follows from the reproducing property and equivariance of $\omega$. To see
this, consider a curve $W(t)=s(\rho(t))\mathcal{U}(t)$ with tangent
$\dot W=\dif s(\dot\rho)\mathcal{U}+s\dot{\mathcal{U}}$.

For the fiber direction, take $\mathcal{U}(t)=\mathcal{U} \me^{tu}$, so that
$\dot{\mathcal{U}}=\mathcal{U}u$ and
$s\dot{\mathcal{U}}=s\mathcal{U}u=Wu=u^\#_W$. The reproducing property
$\omega(u^\#)=u$ then gives
$\omega(s\dot{\mathcal{U}})=u=\mathcal{U}^\dagger \dot{\mathcal{U}}$. By
linearity this extends to arbitrary
$\dot{\mathcal{U}}\in\mathfrak{u}(k)$:
\begin{equation}
\omega(s\dot{\mathcal{U}})=\mathcal{U}^\dagger \dot{\mathcal{U}}.
\end{equation}
For the base direction, at the point $s(\rho)$ we have
$\omega_s(\dif s(\dot\rho))=\mathcal{A}_{\text{U}}(\dot\rho)$. Under the
right
translation $R_V:s\mapsto sV$, the pushforward of $\dif s(\dot\rho)$ is
$\dif s(\dot\rho)V$, and equivariance
$R_V^*\omega=\mathrm{Ad}_{V^{-1}}\omega$ gives
\begin{equation}
\omega_{sV}\bigl(\dif s(\dot\rho)V\bigr)
=V^\dagger \omega_s(\dif s(\dot\rho))V
=V^\dagger \mathcal{A}_{\text{U}}(\dot\rho)V.
\end{equation}
Adding the two contributions and using
$\dot\rho=\pi_*\dot W$ and $\dot{\mathcal{U}}=\dif_P \mathcal{U}(\dot W)$, we
obtain
\begin{equation}
\omega(\dot W)
=\mathcal{U}^\dagger(\pi^*\mathcal{A}_{\text{U}})(\dot W)\mathcal{U}
+\mathcal{U}^\dagger(\dif_P \mathcal{U})(\dot W).
\end{equation}
Since this holds for arbitrary $\dot W$, Eq.~\eqref{app:omegaV} follows.
Substituting Eq.~\eqref{W=sV} into the Sylvester equation
\eqref{eq:sylvester} with $X=\dif W$ and using
$h=W^\dagger W=\mathcal{U}^\dagger r\mathcal{U}$, one finds
\begin{align}
&\mathcal{U}^\dagger\bigl(s^\dagger\dif s-\dif s^\dagger s\bigr)\mathcal{U}
+\mathcal{U}^\dagger r\dif \mathcal{U}-\dif \mathcal{U}^\dagger r\mathcal{U}\notag\\
=&\mathcal{U}^\dagger\bigl(r\,\pi^*\mathcal{A}_{\text{U}}
+\pi^*\mathcal{A}_{\text{U}}r\bigr)\mathcal{U}
+\mathcal{U}^\dagger r\dif \mathcal{U}
+\mathcal{U}^\dagger\dif \mathcal{U}\,\mathcal{U}^\dagger r\mathcal{U}.
\label{app:substituted}
\end{align}
Here we have used Eq.~\eqref{app:omegaV} to replace $\omega$ on the left-hand
side by $\pi^*\mathcal{A}_{\text{U}}$ and $\dif \mathcal{U}$. The terms
$\mathcal{U}^\dagger r\dif \mathcal{U}$ cancel between the two sides, and
the remaining
$\dif \mathcal{U}$-terms combine as
\begin{align}
-\dif \mathcal{U}^\dagger r\mathcal{U}
-\mathcal{U}^\dagger\dif \mathcal{U}\,\mathcal{U}^\dagger r\mathcal{U}
&=-\bigl(\dif \mathcal{U}^\dagger \mathcal{U}
+\mathcal{U}^\dagger\dif \mathcal{U}\bigr)\mathcal{U}^\dagger r\mathcal{U}\notag\\
&=-\dif(\mathcal{U}^\dagger \mathcal{U})\,\mathcal{U}^\dagger r\mathcal{U}=0,
\end{align}
where we have used $\dif(\mathcal{U}^\dagger \mathcal{U})=0$. Contracting
with an arbitrary
horizontal vector $\tilde{X}$ and using
$\pi^*\mathcal{A}_{\text{U}}(\tilde{X})=\mathcal{A}_{\text{U}}(X)$, we obtain
Eq.~\eqref{AU_sylvester}. To evaluate the right-hand side, note that the
frame
connection is $K=E^\dagger \dif E$ with $K^\dagger=-K$. From
$\dif s=(\dif E)\sqrt{r}+E\,\dif\sqrt{r}$ one computes
\begin{equation}
s^\dagger \dif s-\dif s^\dagger s
=2\sqrt{r}K\sqrt{r}+[\sqrt{r},\dif\sqrt{r}].
\end{equation}
Since $r$ is diagonal, $[\sqrt{r},\dif\sqrt{r}]=0$, and
Eq.~\eqref{AU_sylvester} reduces to
\begin{equation}
r\mathcal{A}_{\text{U}}+\mathcal{A}_{\text{U}}r
=2\sqrt{r}K\sqrt{r},
\end{equation}
whose components give Eq.~\eqref{AU_explicit}. The cancellation of the
$\dif \mathcal{U}$-terms relies solely on $\dif(\mathcal{U}^\dagger
\mathcal{U})=0$, i.e., on the unitarity
of the fiber coordinate; this is the fixed-rank counterpart of the
cancellation
of the $\dif U$-terms in the full-rank case.

\section{Chern-number evaluations for the topological examples}
\label{app:chern}

This appendix collects the two Chern-number calculations quoted in the main
text: the second Chern number of the non-degenerate $S^2\times S^2$ family
of
Sec.~\ref{factorization}, and the projector-curvature evaluation behind the
Yang-monopole invariant of Sec.~\ref{yang}.

\subsection{Second Chern number of the $S^2\times S^2$ family}
\label{app:s2xs2}

For the family defined in Sec.~\ref{factorization}, the curvature is block
diagonal, $\mathcal{F}_{\text{U}}=\diag(F_p,F_q)$, where
\begin{equation}
F_p=\frac{\mi}{2}\sin\theta_p\,\dif\theta_p\wedge\dif\phi_p,
\quad
F_q=\frac{\mi}{2}\sin\theta_q\,\dif\theta_q\wedge\dif\phi_q
\end{equation}
are the Berry curvatures of the two eigenlines $L_1,L_2$ pulled back from
the
factor spheres. Since each of $F_p^2$ and $F_q^2$ is a four-form pulled back
from a two-dimensional sphere,
\begin{equation}
\tr(\mathcal{F}_{\text{U}}\wedge\mathcal{F}_{\text{U}})
=F_p^2+F_q^2=0,
\end{equation}
while, using $F_p^2=F_q^2=0$ and Fubini's theorem,
\begin{align}
&\tr \mathcal{F}_{\text{U}}\wedge\tr \mathcal{F}_{\text{U}}
=(F_p+F_q)\wedge(F_p+F_q)=2F_p\wedge F_q,\notag\\
&\int_{S^2\times S^2}F_p\wedge F_q
=\left(\int_{S^2_p}F_p\right)\left(\int_{S^2_q}F_q\right)
=-4\pi^2,
\end{align}
because $\int_{S^2}F_{p,q}=2\pi\mi$, i.e.,
$c_1(L_{1,2})=\frac{\mi}{2\pi}\int F_{p,q}=-1$. Therefore
\begin{equation}
C_2=\frac{1}{8\pi^2}\bigl[0-2\cdot(-4\pi^2)\bigr]=1.
\end{equation}
We also note that $\tr \mathcal{F}_{\text{U}}=F_p+F_q$ integrates to
$2\pi\mi$
on each factor sphere, so that its pairing with either factor gives
$\frac{\mi}{2\pi}\mathlarger{\int}(F_p+F_q)=-1$, consistently with
$c_1(L_1)=c_1(L_2)=-1$
and with $\tr \mathcal{F}_{\text{U}}=\tr \mathcal{F}_K$.

\subsection{Projector curvature identity and the Yang monopole}
\label{app:projector}

For a projector $P$ satisfying $P^2=P$, the projected connection
$\nabla=P\dif$ has curvature $F=P(\dif P)^2P$, which in a local frame $E$
with $EE^\dagger=P$ and $\mathcal{A}=E^\dagger\dif E$ satisfies
\begin{equation}
\tr(F\wedge F)=\tr[P(\dif P)^4].
\end{equation}
This follows from the identity $P\dif P\,P=0$ and the cyclic property of the
trace. For the Yang-monopole projector
$P_+=\frac{1}{2}(I+\sum_a n_a\Gamma_a)$ we have
$\dif P_+=\frac{1}{2}\sum_a\dif n_a\,\Gamma_a$, and therefore
\begin{equation}
(\dif P_+)^4=\frac{1}{16}\sum_{a,b,c,d}\dif n_a\wedge\dif n_b\wedge\dif n_c\wedge\dif n_d\,\Gamma_a\Gamma_b\Gamma_c\Gamma_d.
\end{equation}
The relevant Clifford traces are
\begin{align}
&\tr(\Gamma_a\Gamma_b\Gamma_c\Gamma_d)
=4\bigl(\delta_{ab}\delta_{cd}-\delta_{ac}\delta_{bd}+\delta_{ad}\delta_{bc}\bigr),
\notag\\
&\tr(\Gamma_e\Gamma_a\Gamma_b\Gamma_c\Gamma_d)
=\pm4\,\epsilon_{eabcd},
\end{align}
with the sign in the second line fixed by the choice of Clifford
representation. Substituting into $\tr[P_+(\dif P_+)^4]$ and using
$\tr P_+=2$,
\begin{align}
&\tr[P_+(\dif P_+)^4]\notag\\
=&\frac{1}{32}\sum_{a,b,c,d}\dif n_a\wedge\dif n_b\wedge\dif n_c\wedge\dif n_d\,
\big[\tr(\Gamma_a\Gamma_b\Gamma_c\Gamma_d)\notag\\&+n_e\tr(\Gamma_e\Gamma_a\Gamma_b\Gamma_c\Gamma_d)\big].
\end{align}
The four-Gamma trace is symmetric under pairwise exchange of indices and
hence cancels against the fully antisymmetrized wedge product, while the
five-Gamma trace gives
\begin{align}
\tr[P_+(\dif P_+)^4]
&=\pm\frac{4}{32}\,\epsilon_{eabcd}\,n_e\,
\dif n_a\wedge\dif n_b\wedge\dif n_c\wedge\dif n_d
\notag\\&=\pm 3\,\omega_{S^4},
\end{align}
since
$\omega_{S^4}=\frac{1}{4!}\epsilon_{eabcd}n_e\dif n_a\wedge\dif n_b\wedge\dif n_c\wedge\dif n_d$
is the volume form of the unit $4$-sphere. With the orientation convention
of
Sec.~\ref{yang} the sign is positive, giving
$\tr[P_+(\dif P_+)^4]=3\,\omega_{S^4}$.

\bibliography{Review,Review1,Review2,Review3,Review4,Review5,Review6}
\end{document}